\theoremstyle{plain} 
\newtheorem{lemma}{Lemma}
\newtheorem{prop}{Proposition}
\theoremstyle{definition}
\newtheorem{remark}{Remark}
\def\bal#1\eal{\begin{align}#1\end{align}}
\newcommand{\bW}{{\bf W}}
\newcommand{\bH}{{\bf H}}
\newcommand{\bR}{{\bf R}}
\newcommand{\bY}{{\bf Y}}
\newcommand{\bI}{{\bf I}}
\newcommand{\bU}{{\bf U}}
\newcommand{\bA}{{\bf A}}
\newcommand{\bB}{{\bf B}}
\newcommand{\bQ}{{\bf Q}}
\newcommand{\bX}{{\bf X}}
\newcommand{\bs}{{\bf s}}
\newcommand{\gl}{\lambda}
\newcommand{\bx}{{\bf x}}
\newcommand{\ba}{{\bf a}}
\newcommand{\bb}{{\bf b}}
\newcommand{\bc}{{\bf c}}
\newcommand{\bh}{{\bf h}}
\newcommand{\bz}{{\bf z}}
\newcommand{\bo}{{\bf 0}}
\newcommand{\bw}{{\bf w}}
\newcommand{\bRequire}{{\bf Require}}
\newcommand{\bp} {\begin{proof}}
\newcommand{\ep} {\end{proof}}
\newcommand{{\Rb}} {\right)}
\newcommand{{\Rf}} {\right\}}
\newcommand{{\diag}} {\mathrm{diag}}
\begin{document}

\title{Secure Cognitive Radio Communication  via Intelligent Reflecting Surface} 

\author{Limeng Dong, Hui-Ming Wang \emph{Senior Member, IEEE},  and Haitao Xiao


\thanks{The authors are with the School of Information and Communications Engineering, and also with the Ministry of Education Key Lab for Intelligent Networks and Network Security, Xi'an Jiaotong University, Xi'an, Shaanxi 710049, China (e-mail: dlm$\_$nwpu@hotmail.com; xjbswhm@gmail.com; xht8015949@xjtu.edu.cn)}

}

\maketitle

\begin{abstract}


In this paper,  an intelligent reflecting surface (IRS) assisted  spectrum sharing underlay cognitive radio (CR) wiretap channel (WTC) is studied, and we aim at enhancing the secrecy rate of secondary user in this channel subject to total power constraint at secondary transmitter (ST), interference power constraint (IPC) at primary receiver (PR) as well as unit modulus constraint at IRS. Due to extra IPC and eavesdropper (Eve) are considered, all the existing solutions for enhancing secrecy rate of  IRS-assisted non-CR WTC as well as enhancing transmission rate in IRS-assisted CR channel without eavesdropper fail in this work. Therefore, we propose new numerical solutions to optimize the secrecy   rate of this channel  under full primary, secondary users' channel state information (CSI) and three different cases of Eve's CSI: full CSI, imperfect  CSI with bounded estimation error, and no  CSI.  To solve the difficult non-convex  optimization problem, an efficient alternating optimization (AO) algorithm is proposed to jointly optimize the beamformer at ST and phase shift coefficients at IRS. In particular, when optimizing the phase shift coefficients during each iteration of AO, a Dinkelbach based solution in combination with successive approximation and penalty based solution is proposed under full CSI and a penalty convex-concave procedure solution is proposed under imperfect Eve's CSI. For no Eve's CSI case, artificial noise (AN) aided approach is adopted to help enhancing the secrecy rate. Simulation results show that our proposed solutions for the IRS-assisted design greatly enhance the secrecy performance compared with the existing numerical solutions with and without IRS under full and imperfect Eve's CSI. And positive secrecy rate can be achieved by our proposed AN aided approach given most channel realizations  under no Eve's CSI case so that secure communication also  can be guaranteed. All of the proposed  AO algorithms are guaranteed to monotonic convergence.

\end{abstract}	

\begin{IEEEkeywords}
 Cognitive radio, intelligent reflecting surface,  MISO, secrecy rate, CSI.
\end{IEEEkeywords}

\section{Introduction}

Due to exponential growth of wireles systems and services in the past two decades, spectrum has become a very scarce resource. Cognitive radio (CR) concept  has been proposed and is treated as one of the most promising technologies. Using the spectrum sensing and sharing technology, the  growth of wireless devices and the scarcity of spectrum resources can be effectively alleviated.  However,  CR networks is suffering from lots of security threads, such as primary user emulation, jamming and eavesdropping in the physical layer, and spectrum sensing data falsification in the network layer as well as cross layer attacks \cite{Fragkiadakis-13}. Eavesdropping  is a kind of  passive attack in the physical layer that  brings great security risks  due to the broadcast nature of wireless channels. Physical layer security (PLS) approach has emerged as a very valuable complement to cryptography-based approaches. The key idea of this approach is that by exploiting  the properties of wireless channels,  the transmitted information for legitimate users can be completely “hidden” from eavesdropping, resulting secrecy of communication \cite{Bloch-11}. And the secrecy rate, namely as the difference of mutual information of legitimate user and eavesdropper, is the key performance metric for secrecy communications, and various works were established to enhance the secrecy rate of CR channels analytically or numerically.

\subsection{Related work}

A number of research results for secure  multi-antenna spectrum sharing CR communication based on PLS were established \cite{Pei-10}-\cite{Fang-15}. In \cite{Pei-10}, the secrecy rate maximization  of CR multi-input single-output (MISO) wiretap channel (WTC) under full  channel state information (CSI) subject to total power constraint (TPC) at secondary transmitter as well as interference power constraint (IPC) at primary receiver were  studied, and the imperfect CSI case was also established in \cite{Pei-11}. In \cite{Nahari-11}-\cite{Wang-15},  artificial noise (AN) aided approach was also proposed to maximize the secrecy rate. Apart from the MISO case, secure CR multi-input multi-output (MIMO) WTC was also considered in \cite{Dong-18}-\cite{Fang-15}, and  the secrecy rate maximization of this channel was investigated either analytically or numerically.    However, there is an inevitable problem in  spectrum sharing CR networks: as primary receiver is located close to the secondary transmitter, the IPC is becoming  tight so that the transmitter is unable to allocate full power for signaling. Therefore, the achievable secrecy rate at secondary user is likely to saturate with transmit power \cite{Dong-18}. Hence, how to eliminate the restrictions brought by IPC on enhancing the secrecy performance in CR channels is still an open problem.
  
 Recently, intelligent reflecting surface (IRS), has been proposed and it has drawn wide attention due to its extraodinary advantages. IRS is a   metasurface consisting of low complexity passive reflecting elements \cite{Hu-18}. These elements could change the propagation channels by  inducing certain phase shift  for the incident electromagnetic signal waves via the software in the controller so that  the quality of communications at user can be greatly improved. Since IRS  is a passive device,  it does not consume any power for signal reflection, and can be easily deployed on many areas such as buildings, ceilings or indoor spaces. Furthermore,  IRS  also does not produce any extra noise to users since it is not equipped with A/D, D/A converter, power amplifiers  or other signal processing devices. Therefore, these great benefits make IRS as a promising green energy-efficient technique in beyond 5G or even 6G communications \cite{Yuan-20}-\cite{Renzo-20b}.

Inspired by these advantages brought by IRS, several  studies of IRS-assisted multi-antenna communication  were  proposed, and lots of results  were shown that IRS greatly enhance  the transmission rate under either full CSI \cite{Huang-20}-\cite{Pan-20} or imperfect CSI \cite{Hu-20}\cite{Zhou-19}. In \cite{Huang-19c}, it was shown that the IRS-assisted design also greatly improves the energy efficiency. Motivated by these works, IRS was also applied to enhance the transmission in multi-antenna CR channels \cite{Yuan-19}-\cite{Xu-20b}. In \cite{Yuan-19}-\cite{Xu-20}, the transmission rate optimization of IRS-assisted MISO CR channel subject to TPC, IPC and unit modulus constraint (UMC) was studied, and it was later extended to MIMO case \cite{Zhang-20}. In \cite{He-20}\cite{Zhang-20b}, the power minimization optimization algorithm of IRS-assisted MISO CR channel subject to IPC, UMC as well as target quality-of-service (QoS) for secondary user under full and statistical CSI were proposed. The sum rate of a full-duplex IRS-assisted MISO CR channel was also investigated in \cite{Xu-20b}. All these works again validated that IRS greatly improved the system performance of CR channels.

 Furthermore, IRS was also combined with PLS to enhance the secrecy of communication in MISO/MIMO WTCs \cite{Cui-19}-\cite{Dong-20c}. 
 In  \cite{Cui-19}-\cite{Yu-19}, secure IRS-assisted MISO WTC subject to TPC at transmitter and UMC at IRS was studied, and the simulation results showed that proposed numerical solutions for this IRS-asssited desgin greatly boosted the  user's secrecy rate compared with existing solutions for no IRS case. In \cite{Dong-21},  a double IRS-assisted design was considered, and  a product Riemmanian manifold based algorithm was investigated  to solve the non-convex secrecy rate problem. In \cite{Dong-20d}, an efficient AN-aided method was proposed to enhance the secrecy rate when there is completely no eavesdropper's CSI. In \cite{Xu-19}, the IRS-assisted multi-user MISO downlink wiretap channel was also investigated under full eavesdropper's CSI, and it was later extended to the case of imperfect eavesdropper's CSI  \cite{Yu-20}. In addition to MISO case, the numerical solutions for enhancing the
 IRS-assisted MIMO WTC was recently established  \cite{Dong-20}\cite{Dong-20c}. Same with the non-secure case, all these current literatures for secure IRS-assisted multi-antenna system indicated that IRS greatly boosted the secrecy rate compared with traditional solutions.

However, all the studies \cite{Cui-19}-\cite{Dong-20c} for secure IRS-assisted design didn't consider CR-setting in which extra IPC is involved in addition to TPC and UMC. Therefore, all the solutions  in the current literatures may fail to CR case (unless the IPC is a relaxed ignorable constraint). 
  Furthermore, all the existing solutions for the IRS-assisted CR channels \cite{Yuan-19}-\cite{Xu-20b} did  not consider security issues. When eavesdropper exists in the system, the secrecy rate optimization problem becomes complicated due to the new structure of objective function, which is significantly more complex than the single log formular for the non-secure case.  Hence,  new efficient numerical solutions is necessary  to develop in the  secure CR setting. 

\subsection{Contributions}

 Against the above background, in this paper, we consider an IRS-assisted spectrum sharing  underlay CR MISO WTC, and  focus on  enhancing  the secrecy rate at secondary user subject to TPC at secondary transmitter, IPC at primary receiver as well as UMC at IRS. The key motivation for applying IRS to secure CR system in this work is of two aspects: firstly, to the best of our knowledge, the study of IRS-assisted secure CR communication  has never appeared in the literature, and all the existing numerical solutions cannot be directly applied to this setting; secondly, with IRS, full power allocation at secondary transmitter for signaling can be realized so that the secrecy rate can grow unbounded  with transmit power. This is significantly different from those studies without IRS \cite{Pei-10}-\cite{Fang-15} in which the  IPC becomes tight  as the transmit power increases so that the secrecy rate saturates eventually. Specifically,  we assume that full CSI of legitimate primary and secondary user are available, and consider three conditions of   eavesdropper's CSI: full CSI, imperfect CSI with bounded estimation error and completely no CSI. And  numerical algorithms are developed to enhance the secrecy rate  under each considered CSI condition.  The following summarizes the key contributions of our work.

1). Firstly, we assume that the eavesdropper's CSI  is perfectly  available, and propose  an efficient  alternating optimization (AO) algorithm to jointly optimize the beamformer $\bw$ and phase shift coefficients $\bs$ in the non-convex secrecy rate optimization problem. The main dificulty of this work is of optimizing the $\bs$ given $\bw$ in the non-convex fractional programming sub-problem. In particular, we propose a Dinkelbach method in combination with successive convex approximation (SCA) and  penalty based (PB) approach to optimize the secrecy rate. As the convergence of AO algorithm is reached, a limit point solution for the problem can be obtained.

2). Secondly, we assume that the eavesdropper's CSI  is imperfectly available due to the bounded channel estimation errors. To solve the complicated non-convex secrecy rate optimization problem,  we firstly transform the infinite non-convex constraints to convex one by adding auxiliary variables, and  then apply AO algorithm to jointly optimize  $\bw$ and $\bs$. When optimizing $\bw$ given $\bs$, SCA algorithm is applied to solve the sub-problem, and when optimizing $\bs$ given $\bw$,  a penalty convex concave procedure (P-CCP) approach is proposed. The key idea of P-CCP is to relax the problem by adding slack variables so that the non-convex UMC can be violated, and then penalizing the sum of the violations.  As the convergence is reached, the solution returned by the P-CCP is an approximate first-order optimal solution for the original sub-problem.

3). Thirdly, we assume a more practical case that the eavesdropper's CSI is completely unavailable at transmitter. To enhance the secrecy performance under this case, we propose an artificial noise (AN) aided approach, in which a minimum power is firstly optimized to meet a target QoS  constraint at secondary receiver, and then  the residual power at secondary transmitter is applied for AN signaling so as to decrease the quality of communication at eavesdropper. Furthermore, to guarantee that the target QoS at secondary receiver is  not affected by the AN, we set the directions for AN signaling to both null space of the aggregated   channels of secondary and primary  receiver so that only eavesdropper is interfered by AN. 

4). Simulation results have validated the monotonic convergence of the proposed algorithm. And when full CSI and imperfect Eve's CSI are assumed, our proposed AO algorithm for the IRS-assisted design greatly enhances the secrecy rate compared with the existing solutions for no IRS case. And even when the primary receiver is located close to secondary transmitter,  allocating full power for signaling is still possible and the secrecy rate can grow unbounded with transmit power as if there is no IPC. This is significantly different from no IRS case in which the secrecy rate is likely to saturate since full power cannot be allocated for signaling. When no Eve's CSI is assumed,  positive secrecy rate  can be achieved by our proposed AN aided scheme under most scenarios and hence secure communications also can be guaranteed.

The rest of the paper is organized as follows: Section II describes the channel model. In Section III, the algorithm for secrecy rate maximization under full Eve's CSI is proposed. In Section IV, the algorithm for secrecy rate maximization under imperfect eavesdropper's CSI is proposed. Section V gives solution for enhancing the secrecy rate under no eavesdropper's CSI. Simulation results have been carried out to evaluate the performance and convergence of proposed algorithm in Section VI. Finally, Section VII concludes the paper.

\emph{Notations}: Bold lower-case letters ($\ba$) and capitals ($\bA$) denote the vector and matrix respectively; $\bA^{\rm T}$, $\bA^{*}$ and $\bA^{\rm H}$ denote transpose, conjugate and  Hermitian conjugate of $\bA$, respectively; $\bA \geq \bo$ means  positive semi-definite; $E\left \{ \cdot  \right \}$ is statistical expectation, $\gl_i(\bA)$ denotes eigenvalues of $\bA$, which are in decreasing order unless indicated otherwise, i.e. $\gl_1\ge \gl_2\ge \gl_3....$; $tr(\bA)$ and $|\bA|$ are the trace and determinant of $\bA$; $\bI$ is an identity matrix with appropriate size; $\mathbb{C}^{M \times N}$ and $\mathbb{R}^{M \times N}$ are the space of $M\times N$ matrix with complex-valued elements and real-valued elements, respectively;  $diag(\ba)$ is to transform the vector $\ba$ to a diagonal matrix in which all diagonal entries are in $\ba$; $arg\{\ba\}$ denotes the phase of each entry of   $\ba$; $rank(\bA)$ is the rank of $\bA$; $\parallel \ba \parallel$ denotes the Euclidian norm of vector $\ba$, and $\parallel \bA \parallel_F$ denotes the F-norm of matrix $\bA$;  $Re\{a\}$ denotes the real element of a; $\odot$ denotes Hadamard product; $\mathcal{N}(A)$ denotes the null space of $\bA$.

\section{Channel Model}

Considering an IRS-assisted spectrum sharing underlay  CR Gaussian WTC model shown in Fig.1, in which an IRS, a  secondary transmitter (Alice),  secondary receiver (Bob),  eavesdropper (Eve) as well as a primary receiver (PR) are included.  In this model, Alice is equipped with $m$ antennas and Bob, Eve, PR are all equipped with single antenna, the IRS is equipped with $n$ reflecting elements. To help Alice enhancing the secrecy performance, the task of IRS in this model is to adjust the phase of the incident signals via the reflecting elements. And  there is a controller   connected with both IRS and Alice which is used to control the phase shift coefficient of reflecting elements as well as other coordination tasks  for channel acquisition and data transmission  \cite{Cui-19}. 

\begin{figure}[t]
	\centerline{\includegraphics[width=3.2in]{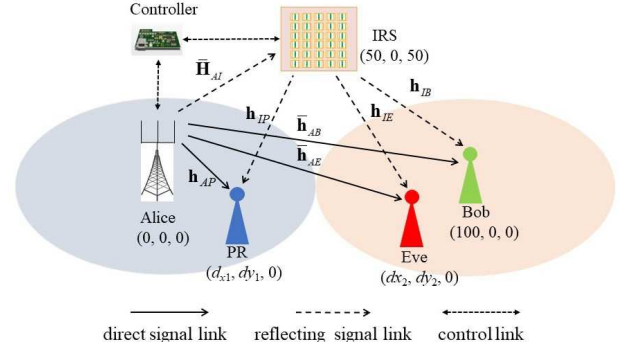}}
	\caption{A block diagram of IRS-assisted Gaussian CR MISO wiretap channel}
\end{figure}

Denote $x$ as the information signals to Bob following $E\left \{ |x|^2  \right \}=1$, let $\bar{\bh}_{AB}\in\mathbb{C}^{1\times m}$, $\bar{\bh}_{AE}\in\mathbb{C}^{1\times m}$, $\bh_{IB}\in\mathbb{C}^{1\times n}$,  $\bh_{IE}\in\mathbb{C}^{1\times n}$, $\bar{\bH}_{AI}\in\mathbb{C}^{n\times m}$, $\bh_{AP}\in\mathbb{C}^{1\times m}$ and $\bh_{IP}\in\mathbb{C}^{1\times n}$ be as the communication link of Alice-Bob, Alice-Eve,   IRS-Bob, IRS-Eve, Alice-IRS, Alice-PR and IRS-PR respectively, then the received signals at Bob $y_B$ and Eve $y_E$ are
\bal
\notag
&y_B= (\bar{\bh}_{AB}+\bh_{IB}diag(\bs^*)\bar{\bH}_{AI})\bw x+\xi_B,\\
\notag
&y_E= (\bar{\bh}_{AE}+\bh_{IE}diag(\bs^*)\bar{\bH}_{AI})\bw x+\xi_E
\eal
respectively where $\bw\in\mathbb{C}^{m\times 1}$ is the beamformer at Alice, $\bs=[e^{j\theta_1},e^{j\theta_2},...,e^{j\theta_n}]^{\rm H}$, $\theta_i$ is the phase shift coefficient at the $i$-th  reflecting element, $\xi_{B}\sim\mathcal{CN}(0,\sigma ^2_B)$ and $\xi_{E}\sim\mathcal{CN}(0,\sigma ^2_E)$  represent complex noise at  Bob and Eve respectively, $\sigma ^2_B$, $\sigma ^2_E$ denote the noise power. After some manipulations, $y_B$ and $y_E$ can be further equivalently expressed as
\bal
\notag
&y_B=(\bar{\bh}_{AB}+\bs^{\rm H}\bar{\bH}_B)\bw x+\xi_B,\\
\notag
&y_E=(\bar{\bh}_{AE}+\bs^{\rm H}\bar{\bH}_E)\bw x+\xi_E
\eal
respectively, where $\bar{\bH}_B=diag(\bh_{IB}^T)\bar{\bH}_{AI}$, $\bar{\bH}_E=diag(\bh_{IE}^T)\bar{\bH}_{AI}$ denote the effective cascaded channels of the reflecting communication link Alice-IRS-Bob, Alice-IRS-Eve respectively. We assume that the CSI of $\bar{\bh}_{AB}$, $\bar{\bH}_B$ are fully known at Alice throughout the paper.  Based on this signal model, the transmission rate $C_B$ at Bob and $C_E$ at Eve are expressed as 
\bal
\notag
&C_B=\log_2(1+|(\bh_{AB}+\bs^{\rm H}\bH_B)\bw|^2),\\
\notag
&C_E=\log_2(1+|(\bh_{AE}+\bs^{\rm H}\bH_E)\bw|^2),
\eal
 respectively, where  for $i\in\{B, E\}$, $\bh_{Ai}=\bar{\bh}_{Ai}/\sigma_i$, $\bH_i=\bar{\bH}_i/\sigma_i$. Since primary user is the licensed spectrum holder in spectrum sharing network, the interference generated from Alice to PR should keep below a certain pre-defined threshold. Therefore, the design of $\bw$ and $\bs$ should satisfy both TPC and IPC, i.e., $\parallel \bw \parallel^2\leq P_T$ and $|(\bh_{AP}+\bs^{\rm H}\bH_P)\bw|^2\leq P_I$, where $P_T$ is the total power budget at Alice, $P_I$ is the maximum interference power threshold at PR and where $\bH_P=diag(\bh_{IP}^{\rm T})\bH_{AI}$ is the effective cascaded channel of the reflecting communication link Alice-IRS-PR. Since PR is also a legitimate user, we assume that PR shares its CSI of $\bH_P$ and $\bh_{AP}$ with Alice. IRS is only capable of adjusting the phase of the signals without changing the amplitudes so that $\bs$ follows the UMC $|s_i|=1$ where $s_i$ is the $i$-th entry of $\bs$. Furthermore,  the  interference power generated by primary transmitter to Bob and Eve are treated as complex Gaussian noise mixed  in  $\xi_B$ and $\xi_E$ , and secondary user has no privilege to make primary transmitter changes its signaling strategy, which are also  standard settings being widely accepted in the literature  \cite{Pei-10}-\cite{Fang-15}, \cite{Yuan-19}-\cite{Xu-20b}.

Therefore, based on the key concept of information-theoretic PLS, to guarantee secure communication for this channel, the achievable secrecy rate $C_B-C_E$ should be positive, and larger secrecy rate indicates better secrecy performance \cite{Bloch-11}. In the following sections, we  focus on  enhancing the secrecy performance of this IRS-assisted channel by jointly optimizing $\bw$ and $\bs$ under three conditions of the Eve's CSI $\bH_E$ and $\bh_{AE}$: full CSI,  imperfect CSI with bounded estimation error, and completely no CSI.

\section{Algorithm for maximizing secrecy rate under full Eve's CSI}

Firstly, we consider an ideal assumption that Eve's CSI $\bar{\bh}_{AE}$ in the direct link and $\bar{\bH}_E$ in the reflecting link are perfectly known at Alice. This can be realized since Eve is just other user in the system and it also shares its CSI with Alice but is untrusted by Bob. For how to estimate the CSI of all the channel links, we consider that the channel is with quasi-static block fading, and focus on one particular fading block with length $T$ symbols over which all the channels remain approximately constant. Then,  $T$ is divided into three successive time slots, and Bob, Eve, PR send pilot signals  to Alice in each  time slot  so as to estimate the CSI. Finally, each time slot can be further divided into two smaller time slots, and we apply  the existing channel estimation solutions (see e.g. \cite{Wang-19}) to estimate the direct and cascaded channel in each smaller time slot.

Given full CSI, we formulate the secrecy rate optimization problem for this IRS-assisted CR MISO WTC as follows.
\bal
\notag
&(P1)\ \underset{\bw,  \bs}{\max} \  C_B-C_E\\
\label{TPC+IPC}
s.t. &\ \parallel \bw \parallel^2\leq P_T, |(\bh_{AP}+\bs^{\rm H}\bH_P)\bw|^2\leq P_I, \\ 
\label{UMC}
\quad\quad &|s_i|=1, i=1,2,...,n.
\eal
$P1$ is a non-convex optimization problem due to non-convex objective function and constraints, and currently there is no closed-form or numerical solutions for this problem in the literature. Since the variable $\bw$ and $\bs$ are coupled in both the objective function and IPC constraints, it is difficult to simultaneously optimize them. Therefore, in this subsection, we propose an AO algorithm  to jointly optimize $\bw$ and $\bs$ in two sub-problems.

Firstly, given fixed $\bs$, we formulate the sub-problem for optimizing $\bw$  as
\bal
\notag
(P2)\ \underset{\bw}{\max} \  C_B-C_E \ \ \ s.t.\ \  \eqref{TPC+IPC}.
\eal 
It can be known that $P2$ is a standard secrecy capacity of CR Gaussian wiretap channel optimization problem, which can be directly solved via the existing solutions in \cite{Pei-10}. By setting $\bR=\bw\bw^{\rm H}$, and $\bh_B=\bh_{AB}+\bs^{\rm H}\bH_B$, $\bh_E=\bh_{AE}+\bs^{\rm H}\bH_E$, $\bh_P=\bh_{AP}+\bs^{\rm H}\bH_P$,  P2 can be expressed as
\bal
\notag
(P2')\  &\underset{\bR}{\max} \   (1+\bh_B\bR\bh_B^{\rm H})(1+\bh_E\bR\bh_E^{\rm H})^{-1}\\
\notag
 &s.t.\quad tr(\bR)\leq P_T, \bh_P\bR\bh_P^{\rm H} \leq P_I, \bR\geq\bo.
\eal
 To solve $P2'$, the existing Charnes-Cooper transformation illustrated in \cite{Pei-10} can be directly applied to globally optimize $\bR$, from which the details is omitted here.

After optimizing $\bw$ given $\bs$, the next step is to optimize $\bs$ in the following sub-problem
\bal
\notag
(P3)\ \underset{\bs}{\max}\ h_B(\bs)/h_E(\bs), \ s.t. \ \eqref{UMC}, h_P(\bs)\leq P_I,
\eal
where for $j\in\{B, E, P\}$, $h_j(\bs)=1+\bs^{\rm H}\bar{\bH}_{Ij}\bs+2Re\{\bs^{\rm H}\tilde{\bh}_{Ij}\}+\bar{h}_{Aj}$, 
$\bar{h}_{Aj}=\bh_{Aj}\bR\bh_{Aj}^{\rm H}, \tilde{\bh}_{Ij}=\bH_j\bR\bh_{Aj}^{\rm H}, 
\bar{\bH}_{Ij}=\bH_j\bR\bH_j^{\rm H}$.
$P3$ is a fractional programming  problem, and we propose Dinkelbach based method \cite{Shen-19}\cite{Dinkelbach-67} in combination with SCA and PB algorithm to solve this problem. The key idea of this algorithm is summarized as three aspects: firstly, using Dinkelbach method to transform the problem to a more tractable non-fractional programming problem by adding extra variable; secondly, using SCA to approximate the non-convex objective function and constraints in the transformed problem to simpler linear formular; finally, PB  approach is adopted to globally solve this approximated problem.

\begin{remark}
Here we point out that the Dinkelbach method should be applied ahead of the SCA method when optimizing $P3$. The main reason is that the there are extra terms 1, $2Re\{\bs^{\rm H}\tilde{\bh}_{IB}\}$,  $\bar{h}_{AB}$ in the numerator of $h(\bs)$, and extra terms 1, $2Re\{\bs^{\rm H}\tilde{\bh}_{IE}\}$,  $\bar{h}_{AE}$ in the denominator of $h(\bs)$, which  fail to directly  approximate $h(\bs)$ using  Lemma 1 in \cite{Guan-20}. Hence, motivated by this issue, the solution  we  apply is to firstly transform the problem via Dinkelbach method to a new one, and then use SCA to approximate the new problem to more tractable form (see e.g. \cite{Shen-19}). Although a first-order optimal solution of P3 is out of reach by this method, the algorithm is guaranteed to converge so that a limit point solution can be obtained.

\end{remark}

Specifically, by adding a non negative $u$, $P3$ can be transformed to
\bal
\notag
(P4)\ &\underset{\bs}{\min}\ f(\bs/u)=h_E(\bs)-uh_B(\bs),\ s.t.\ \ \eqref{UMC}, h_P(\bs)\leq P_I.
\eal
Let $\bs(u)$ be as the optimal value of $P4$ given fixed $u$,  according to the key idea of Dinkelbach method  \cite{Dinkelbach-67}, finding the optimal solution of $P3$ is equivalent to searching for the optimal $u$ such that the value of objective function $f(\bs(u)/u)=0$, which can be obtained via BS algorithm since $f(\bs(u)/u)$ is  monotonically decreasing in $u$. 

$P4$ is still  non-convex due to the objective function and non-convex IPC. In the following, we propose SCA method to deal with $P4$, in which the key idea is  to firstly approximate the non-convex (concave) objective function and constraint to more simpler linear formular given a feasible starting point, and then iteratively optimize the  approximated problem. As the convergence is reached,  the  solutions returned by SCA  is a  first-order optimal point for the original problem \cite{Huang-19c}\cite{Sun-17}. 

To approximate the objective function and IPC in $P4$,  the following key lemma is needed, from which the proof can be seen in \cite{Song-15}.
\begin{lemma}
Let $\bA$ be an $n \times n$ Hermitian matrix, then for any point $\tilde{\bx}\in\mathbb{C}^{n\times 1}$, $\bx^{\rm H}\bA\bx$ is upper bounded by $\bx^{\rm H}\bA\bx\leq\bx^{\rm H}\lambda_1(\bA)\bI\bx-2Re\{\bx^{\rm H}(\lambda_1(\bA)\bI-\bA)\tilde{\bx}\}+\tilde{\bx}^{\rm H}(\lambda_1(\bA)\bI-\bA)\tilde{\bx}$.
\end{lemma}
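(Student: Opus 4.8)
The plan is to prove the inequality $\bx^{\rm H}\bA\bx \le \bx^{\rm H}\lambda_1(\bA)\bI\bx - 2\,Re\{\bx^{\rm H}(\lambda_1(\bA)\bI-\bA)\tilde{\bx}\} + \tilde{\bx}^{\rm H}(\lambda_1(\bA)\bI-\bA)\tilde{\bx}$ by exhibiting the right-hand side minus the left-hand side as a manifestly nonnegative quadratic form in the difference $\bx-\tilde{\bx}$. First I would set $\bB = \lambda_1(\bA)\bI - \bA$ and observe that $\bB$ is Hermitian and positive semidefinite: since $\bA$ is Hermitian, its eigenvalues are real, $\lambda_1(\bA)$ is the largest, and hence $\lambda_i(\bB) = \lambda_1(\bA) - \lambda_{n+1-i}(\bA) \ge 0$ for all $i$, which makes $\bB \ge \bo$. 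This is the one structural fact the whole argument rests on, and it is where the hypothesis ``$\bA$ Hermitian'' and the choice of $\lambda_1$ (rather than any smaller eigenvalue) are actually used.

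Next I would expand the nonnegative scalar $(\bx-\tilde{\bx})^{\rm H}\bB(\bx-\tilde{\bx}) \ge 0$. Multiplying out gives $\bx^{\rm H}\bB\bx - \bx^{\rm H}\bB\tilde{\bx} - \tilde{\bx}^{\rm H}\bB\bx + \tilde{\bx}^{\rm H}\bB\tilde{\bx} \ge 0$, and because $\bB$ is Hermitian the two cross terms are conjugates of each other, so $\bx^{\rm H}\bB\tilde{\bx} + \tilde{\bx}^{\rm H}\bB\bx = 2\,Re\{\bx^{\rm H}\bB\tilde{\bx}\}$. Substituting $\bB = \lambda_1(\bA)\bI-\bA$ into the term $\bx^{\rm H}\bB\bx = \lambda_1(\bA)\,\bx^{\rm H}\bx - \bx^{\rm H}\bA\bx$ and rearranging the resulting inequality to isolate $\bx^{\rm H}\bA\bx$ on the left produces exactly the claimed bound, with the constant term $\tilde{\bx}^{\rm H}(\lambda_1(\bA)\bI-\bA)\tilde{\bx}$ appearing from $\tilde{\bx}^{\rm H}\bB\tilde{\bx}$.

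There is essentially no hard part here: the only thing to be careful about is the bookkeeping of Hermitian cross terms (making sure $2\,Re\{\cdot\}$ appears and not $2\cdot$ a complex number), and stating clearly why $\lambda_1(\bA)\bI - \bA \succeq \bo$ — namely that for a Hermitian matrix the spectral theorem gives an orthonormal eigenbasis and $\lambda_1(\bA)\bI - \bA$ acts as multiplication by the nonnegative numbers $\lambda_1(\bA)-\lambda_i(\bA)$ on that basis. I would also note in passing that equality holds iff $\bx = \tilde{\bx}$ on the eigenspaces corresponding to eigenvalues strictly less than $\lambda_1(\bA)$, which is the property that makes this a valid surrogate (tight at the linearization point $\tilde{\bx}$) for the SCA step in $P4$; but since the lemma as stated only claims the inequality, establishing nonnegativity of the quadratic form is all that is required.
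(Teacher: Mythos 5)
Your proof is correct and is exactly the standard argument behind this majorization bound: the paper itself defers the proof to the cited reference \cite{Song-15}, where the inequality is established in the same way, by noting $\lambda_1(\bA)\bI-\bA\geq\bo$ for Hermitian $\bA$ and expanding the nonnegative quadratic form $(\bx-\tilde{\bx})^{\rm H}(\lambda_1(\bA)\bI-\bA)(\bx-\tilde{\bx})\geq 0$. Your bookkeeping of the Hermitian cross terms into $2Re\{\bx^{\rm H}(\lambda_1(\bA)\bI-\bA)\tilde{\bx}\}$ and the remark on tightness at $\bx=\tilde{\bx}$ are both accurate, so nothing is missing.
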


Therefore, denote $\tilde{\bs}$ as a feasible point for $P4$,  using this lemma, the objective function $f(\bs/u)$ can be linearly approximated as
\bal
\notag
f(\bs/u)&\leq c_1(u)+c_2(u)\\
\notag
&+2Re\{\bs^{\rm H}[\tilde{\bh}_{EB}(u)-(\lambda_1(\bar{\bH}_{EB}(u))\bI-\bar{\bH}_{EB}(u))\tilde{\bs}]\}
\eal
where $\tilde{\bh}_{EB}(u)=\tilde{\bh}_{IE}-u\tilde{\bh}_{IB}$, $\bar{\bH}_{EB}(u)=\bar{\bH}_{IE}-u\bar{\bH}_{IB}$, 
$c_1(u)=1+\bar{h}_{AE}-u(1+\bar{h}_{AB})+n\lambda_1(\bar{\bH}_{EB}(u)), 
c_2(u)=n\lambda_1(\bar{\bH}_{EB}(u))-\tilde{\bs}^{\rm H}\bar{\bH}_{EB}(u)\tilde{\bs}$. 
 Similarly, the IPC in $P4$ can also be linearly approximated as
$h_P(\bs)\leq c_3+\tilde{h}_P(\bs/\tilde{\bs})\leq P_I$ 
where 
$c_3=\bar{h}_{AP}+2n\lambda_1(\bar{\bH}_{IP})-\tilde{\bs}^{\rm H}\bar{\bH}_{IP}\tilde{\bs}$, 
  $\tilde{h}_P(\bs/\tilde{\bs})=2Re\{\bs^{\rm H}[\tilde{\bh}_{IP}-(\lambda_1(\bar{\bH}_{IP})\bI-\bar{\bH}_{IP})\tilde{\bs}]\}$.
Hence, after dropping the constant terms, $P4$ can be approximated to a new optimization problem $P5$.
\bal
\notag
(P5)\ &\underset{\bs}{\min}\ \tilde{f}(\bs/(u,\tilde{\bs}))=Re\{\bs^{\rm H}[\tilde{\bh}_{EB}(u)-(\lambda_1(\bar{\bH}_{EB}(u))\bI\\
\notag
&-\bar{\bH}_{EB}(u))\tilde{\bs}]\}\ s.t.\ \eqref{UMC},\ \tilde{h}_P(\bs/\tilde{\bs})\leq P_I-c_3=\tilde{P}_I.
\eal
Therefore, by setting a feasible starting point $\tilde{\bs}$ and iteratively solve $P5$, the solution of $\bs$ is treated as a new starting point $\tilde{\bs}$. As the convergence is reached, the solution returned by SCA algorithm is a first-order optimal point for $P4$. 

However, $P5$ is still difficult to directly solve due to  non-convex \eqref{UMC}. But note that even when the problem is non-convex,  the complementary slackness condition 
\bal
\label{optimal condition}
\mu(\tilde{h}_P(\bs(\mu)/\tilde{\bs})-\tilde{P}_I)=0.
\eal
is always the necessary condition for optimality \cite{Boyd-04}, where  $\mu$ is the Lagrange multiplier respect to IPC $\tilde{h}_P(\bs(\mu)/\tilde{\bs})\leq\tilde{P}_I$. Hence, based on this key property, the optimal solution $\bs$ for $P5$ can be optimized based on two cases.

Considering the first case that $\mu=0$, i.e.,  IPC is an inactive constraint so that $P5$ reduces to the following $P5'$
\bal
\notag
(P5'): \ &\underset{\bs}{\max}\  \  -\tilde{f}(\bs/(u,\tilde{\bs}))\ s.t. \ \eqref{UMC}.
\eal
Hence, it can be directly known that the global optimal solution of $P5'$ is 
\bal
\label{optimal_inactive}
\bs(\mu=0)=arg\{[\lambda_1(\bar{\bH}_{EB}(u))\bI-\bar{\bH}_{EB}(u)]\tilde{\bs}-\tilde{\bh}_{EB}(u)\}.
\eal

Next, we consider the second case that $\mu>0$, i.e., IPC is a tight active constraint. Note that when IPC is tight, it is difficult to directly obtain a closed-form solution of $\bs$ as  in the first case. Therefore, inspired by the solution illustrated in \cite{Pan-20}, we propose a PB approach to globally optimize $\bs$ in this case. The key idea of this approach is of two aspects: firstly, we keep \eqref{UMC} unchanged and  transform $P5$ to a new problem in which the IPC is absorbed by the objective function via adding a penalty variable; secondly, we give the closed-form solution of $\bs$ in the transformed problem given fixed penalty variable. If a proper value of penalty variable is found, the corresponding solution of $\bs$ for the transformed problem is a global optimal solution for the original problem $P5$. Specifically,  we set the dual variable $\mu>0$ for the IPC as a penalty variable, and transform $P5$ to a new problem $P5''$ as follows.
\bal
\notag
(P5'') \ &\underset{\bs}{\max}\  -\tilde{f}(\bs/(u,\tilde{\bs}))-\mu\tilde{h}_P(\bs(\mu)/\tilde{\bs})\  \ s.t. \ \eqref{UMC},
\eal
Then, the global optimal solution of $P5''$ can be obtained as
\bal
\notag
&\bs(\mu>0)=arg\{[(\lambda_1(\bar{\bH}_{EB}(u))\bI-\bar{\bH}_{EB}(u)]\tilde{\bs}\\
\label{optimal_activeIPC}
&-\tilde{\bh}_{EB}(u)-\mu(\tilde{\bh}_{IP}-(\lambda_1(\bar{\bH}_{IP})\bI-\bar{\bH}_{IP})\tilde{\bs})\}.
\eal
To find the optimal value of $\mu$, the following lemma is needed
\begin{lemma}
 Let  $\bs(\mu)$ be as the optimal solution of $P5$ given $\mu$, then  $\tilde{h}_P(\bs(\mu)/\tilde{\bs})$ is  monotonically non-increasing in $\mu$.
\end{lemma}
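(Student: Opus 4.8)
The plan is to exploit the optimality of $\bs(\mu)$ for the penalized problem $P5''$ at two distinct values of the penalty parameter and to combine the two resulting inequalities. Write $g(\bs):=-\tilde{f}(\bs/(u,\tilde{\bs}))$, so that for every fixed $\mu>0$ the point $\bs(\mu)$ maximizes $g(\bs)-\mu\tilde{h}_P(\bs/\tilde{\bs})$ over the feasible set cut out by \eqref{UMC}; this set is compact, so a maximizer exists (and is given in closed form by \eqref{optimal_activeIPC}), and if it is not unique we simply fix any selection.

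First I would pick $0<\mu_1<\mu_2$ and record the two optimality inequalities. Since $\bs(\mu_1)$ is optimal for the $\mu_1$-problem, $g(\bs(\mu_1))-\mu_1\tilde{h}_P(\bs(\mu_1)/\tilde{\bs})\ge g(\bs(\mu_2))-\mu_1\tilde{h}_P(\bs(\mu_2)/\tilde{\bs})$; likewise, optimality of $\bs(\mu_2)$ for the $\mu_2$-problem gives $g(\bs(\mu_2))-\mu_2\tilde{h}_P(\bs(\mu_2)/\tilde{\bs})\ge g(\bs(\mu_1))-\mu_2\tilde{h}_P(\bs(\mu_1)/\tilde{\bs})$. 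Adding the two inequalities cancels the $g$ terms, and after rearranging one is left with $(\mu_2-\mu_1)\tilde{h}_P(\bs(\mu_1)/\tilde{\bs})\ge(\mu_2-\mu_1)\tilde{h}_P(\bs(\mu_2)/\tilde{\bs})$. Dividing through by $\mu_2-\mu_1>0$ yields $\tilde{h}_P(\bs(\mu_1)/\tilde{\bs})\ge\tilde{h}_P(\bs(\mu_2)/\tilde{\bs})$, which is exactly the asserted monotonicity.

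I do not expect any genuine obstacle here: this is the standard ``add the two optimality inequalities and cancel'' argument, and it uses nothing about the particular structure of $g$ or $\tilde{h}_P$ beyond the fact that $\bs(\mu)$ is a true maximizer of the penalized objective; in particular the closed form \eqref{optimal_activeIPC} is not needed for the proof. The only point that deserves a sentence is the possible non-uniqueness of $\bs(\mu)$: the chain of inequalities is valid for whatever optimizers are chosen at $\mu_1$ and $\mu_2$, so the statement should be understood as ``$\tilde{h}_P(\bs(\mu)/\tilde{\bs})$ is non-increasing along any selection of optimizers,'' which is all that the subsequent search over $\mu$ (to enforce \eqref{optimal condition}) actually requires.
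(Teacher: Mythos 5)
Your proof is correct and is essentially the same argument as the paper's: the paper writes the two optimality inequalities via the Lagrangian of $P5$ at $\mu_1$ and $\mu_2$, adds them so the $\tilde{f}$ terms cancel, and concludes $(\mu_1-\mu_2)\tilde{h}_P(\bs(\mu_1)/\tilde{\bs})\leq(\mu_1-\mu_2)\tilde{h}_P(\bs(\mu_2)/\tilde{\bs})$, exactly your exchange argument phrased with the penalized objective of $P5''$ (the two coincide on the unit-modulus feasible set). Your added remarks on existence of a maximizer and on non-uniqueness of the selection are fine and only strengthen the write-up.
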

\begin{proof}
Specifically, the Lagrangian of $P5$ is as
$L(\bs, \mu, v_i)=\tilde{f}(\bs/(u,\tilde{\bs}))+\mu(\tilde{h}_P(\bs/\tilde{\bs})-\tilde{P}_I)+\sum_{i=1}^{n}v_i(|q_i|-1)$,
where $\mu$ and $v_i$ denote the Lagrange multipliers responsible for the IPC and UMC respectively. Considering that $\mu_1>\mu_2>0$, and let $\bs(\mu_1)$,  $\bs(\mu_2)$ denote the  solution of $P5$ with $\mu_1$, $\mu_2$. Assume $\bs(\mu_1)$ and  $\bs(\mu_2)$ are both the global optimal solution of $P5$, then
$L(\bs(\mu_1), \mu_1, v_i)\leq L(\bs(\mu_2), \mu_1, v_i), L(\bs(\mu_2), \mu_2, v_i)\leq L(\bs(\mu_1), \mu_2, v_i)$.
Combining these two inequalities, one obtains
$(\mu_1-\mu_2)\tilde{h}_P(\bs(\mu_1)/\tilde{\bs})\leq(\mu_1-\mu_2)\tilde{h}_P(\bs(\mu_2)/\tilde{\bs})$.
Since $\mu_1>\mu_2$, then $\tilde{h}_P(\bs(\mu_1)/\tilde{\bs})\leq \tilde{h}_P(\bs(\mu_2)/\tilde{\bs})$, from which the lemma holds.
\end{proof}

Therefore, BS algorithm can be applied to optimize the optimal $\mu$ satisfying $\tilde{h}_P(\bs(\mu)/\tilde{\bs})=\tilde{P}_I$. The global convergence of this proposed method is analyzed as follows.
\begin{prop}
 If IPC is tight and active, the output solution $\bs(\mu)$ for $P5''$ via \eqref{optimal_activeIPC} in which $\mu$ is returned by BS is the global optimal solution of $P5$ given $\tilde{\bs}$.
\end{prop}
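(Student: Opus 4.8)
\emph{Proof proposal.} The plan is to run a partial Lagrangian duality argument that promotes the candidate $\bs(\mu)$ from a stationary point to a genuine global optimum of $P5$, exploiting that $P5''$ is exactly the Lagrangian of $P5$ with only the IPC dualized and the UMC kept intact. First I would introduce $L(\bs,\mu)=\tilde{f}(\bs/(u,\tilde{\bs}))+\mu(\tilde{h}_P(\bs/\tilde{\bs})-\tilde{P}_I)$ and the dual value $g(\mu)=\min_{|s_i|=1}L(\bs,\mu)$. Since $P5$ minimizes $\tilde{f}$ and $P5''$, after restoring the constant $-\mu\tilde{P}_I$ dropped from its objective, is precisely the minimization of $L(\cdot,\mu)$ over the UMC set, the expression \eqref{optimal_activeIPC} is nothing but the minimizer that attains $g(\mu)$: after expanding, $L(\bs,\mu)$ equals an additive constant plus $Re\{\bs^{\rm H}\bb(\mu)\}$ for a vector $\bb(\mu)$ affine in $\mu$, and a linear functional $Re\{\bs^{\rm H}\bb(\mu)\}$ over $|s_i|=1$ is optimized componentwise by phase alignment, which is exactly the operation $arg\{\cdot\}$ in \eqref{optimal_activeIPC}. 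Hence the optimality of $\bs(\mu)$ for $P5''$ rests only on this elementary phase-alignment fact and requires no convexity of the feasible set.

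Second I would prove weak duality: for every $\mu\ge 0$ and every $\bs$ feasible for $P5$ we have $\tilde{h}_P(\bs/\tilde{\bs})-\tilde{P}_I\le 0$, so $L(\bs,\mu)\le \tilde{f}(\bs/(u,\tilde{\bs}))$; minimizing over the UMC set gives $g(\mu)\le p^{\star}$, where $p^{\star}$ denotes the optimal value of $P5$. Thus every $\mu\ge 0$ furnishes a lower bound $g(\mu)$ on $P5$.

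Third — and here Lemma 2 and the hypothesis enter — I would produce a multiplier $\mu^{\star}>0$ at which the complementary-slackness condition \eqref{optimal condition} holds with equality, i.e. $\tilde{h}_P(\bs(\mu^{\star})/\tilde{\bs})=\tilde{P}_I$. Lemma 2 states that $\mu\mapsto\tilde{h}_P(\bs(\mu)/\tilde{\bs})$ is monotonically non-increasing; the hypothesis that the IPC is tight and active forces the unconstrained optimizer \eqref{optimal_inactive} at $\mu=0$ to violate it, $\tilde{h}_P(\bs(0)/\tilde{\bs})>\tilde{P}_I$, while feasibility of $P5$ gives $\min_{|s_i|=1}\tilde{h}_P(\bs/\tilde{\bs})\le\tilde{P}_I$, and the latter is the limit of $\tilde{h}_P(\bs(\mu)/\tilde{\bs})$ as $\mu\to\infty$ because then the penalty term dominates $\bs(\mu)$. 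Monotonicity together with the intermediate-value property then yields such a $\mu^{\star}$, which is exactly the root the bisection search is set up to find. For this $\mu^{\star}$ the chain of inequalities closes: for any $\bs$ feasible for $P5$, $\tilde{f}(\bs/(u,\tilde{\bs}))\ge L(\bs,\mu^{\star})\ge g(\mu^{\star})=L(\bs(\mu^{\star}),\mu^{\star})=\tilde{f}(\bs(\mu^{\star})/(u,\tilde{\bs}))$, the last equality by complementary slackness; since $\bs(\mu^{\star})$ obeys the UMC and the IPC with equality it is itself feasible for $P5$, so it attains $p^{\star}$ and is globally optimal.

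The step I expect to be the main obstacle is the third one, specifically justifying the existence of an \emph{exact} root $\mu^{\star}$: the map $\mu\mapsto\bs(\mu)$ in \eqref{optimal_activeIPC} is continuous except at the finitely many values of $\mu$ where a component of $\bb(\mu)$ vanishes, so a priori $\tilde{h}_P(\bs(\mu)/\tilde{\bs})$ could jump across $\tilde{P}_I$. The clean remedy is to argue through the dual function: $g$ is a pointwise minimum of functions affine in $\mu$, hence concave, and by a Danskin-type argument $\tilde{h}_P(\bs(\mu)/\tilde{\bs})-\tilde{P}_I$ is a subgradient of $g$ at $\mu$; the assumption that the IPC is active places the maximizer of $g$ over $\mu\ge 0$ at some $\mu^{\star}>0$, at which $0$ lies in the subdifferential, and this is precisely the complementary-slackness equality (directly when the UMC-minimizer at $\mu^{\star}$ is unique, and via a convex combination of minimizers in the degenerate case). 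Everything else — the phase-alignment optimality of \eqref{optimal_activeIPC}, weak duality, and the final sandwich — is routine.
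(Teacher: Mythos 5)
Your proposal is correct, and it arrives at the proposition through the standard partial-Lagrangian machinery rather than the paper's contradiction argument; the underlying mechanism (exactness of the penalty when the IPC is met with equality) is the same, but the packaging and the level of detail differ. The paper takes as given that the global optimum $\hat{\bs}$ of $P5$ and the bisection output $\bs(\hat{\mu})$ both satisfy the IPC with equality (its \eqref{1} and \eqref{2}), invokes global optimality of $\bs(\hat{\mu})$ for $P5''$ to get \eqref{4}, cancels the equal penalty terms, and concludes $\tilde{f}(\bs(\hat{\mu})/(u,\tilde{\bs}))=\tilde{f}(\hat{\bs}/(u,\tilde{\bs}))$. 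Your weak-duality-plus-complementary-slackness sandwich proves the same thing, but makes explicit two points the paper leaves implicit: first, that \eqref{optimal_activeIPC} really is the global minimizer of the partial Lagrangian over the unit-modulus set (componentwise phase alignment of a linear functional, so no convexity of \eqref{UMC} is needed); second, that an exact multiplier $\mu^{\star}$ with $\tilde{h}_P(\bs(\mu^{\star})/\tilde{\bs})=\tilde{P}_I$ exists at all. The paper simply asserts \eqref{2} as the outcome of bisection, leaning on Lemma 2's monotonicity, so your existence discussion is a genuine strengthening rather than a detour: as you note, $\tilde{h}_P(\bs(\mu)/\tilde{\bs})$ need not be continuous in $\mu$, and without an exact root the complementary-slackness condition \eqref{optimal condition} that drives both proofs is only approximately satisfied.

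One caveat on your remedy for the degenerate case: when the Lagrangian minimizer at $\mu^{\star}$ is non-unique (some component of $\bb(\mu^{\star})$ vanishes), a convex combination of minimizers is generally not unit-modulus, so it does not by itself produce a feasible point meeting the IPC with equality. The clean repair is to observe that the minimizer set is then a product of circles in the free components, hence connected, and $\tilde{h}_P(\cdot/\tilde{\bs})$ is continuous on it, so the intermediate value theorem supplies a genuine unit-modulus minimizer achieving $\tilde{h}_P=\tilde{P}_I$ (which may require a specific selection in \eqref{optimal_activeIPC} for the zero components); alternatively one can, as the paper implicitly does, simply assume the bisection terminates at an exact root. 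With that small repair your argument is complete, and somewhat more self-contained than the paper's.
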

\begin{proof}
The proof is similar with the proof of Theorem 2 in \cite{Pan-20}. Denote the global optimal  solution for $P5$  as $\hat{\bs}$, then 
\bal
\label{1}
\tilde{h}_P(\hat{\bs}/\tilde{\bs})=\tilde{P}_I.
\eal
Then, denote the optimal $\mu$ returned by BS as $\hat{\mu}$, and the corresponding solution of $\bs$ given $\hat{\mu}$ as $\bs(\hat{\mu})$, one observes that $\hat{\mu}$ satisfies
\bal
\label{2}
\tilde{h}_P(\bs(\hat{\mu})/\tilde{\bs})=\tilde{P}_I.
\eal
Now  Proposition 1 can be proved via contradiction.  Assume that $\bs(\hat{\mu})$ is not the global optimal solution of $P5$, then we have $-\tilde{f}(\bs(\hat{\mu})/(u,\tilde{\bs}))\leq -\tilde{f}(\hat{\bs}/(u,\tilde{\bs}))$.
Note that $\bs(\hat{\mu})$ is the global optimal solution of $P5''$ under $\mu=\hat{\mu}$, then 
\bal
\notag
&-\tilde{f}(\bs(\hat{\mu})/(u,\tilde{\bs}))-\mu\tilde{h}_P(\bs(\hat{\mu})/\tilde{\bs})\\
\label{4}
\geq& -\tilde{f}(\hat{\bs}/(u,\tilde{\bs}))-\mu\tilde{h}_P(\hat{\bs}/\tilde{\bs}).
\eal
 By substituting \eqref{1} and \eqref{2} into \eqref{4} and after some manipulations, one obtains $-\tilde{f}(\bs(\hat{\mu})/(u,\tilde{\bs}))\geq -\tilde{f}(\hat{\bs}/(u,\tilde{\bs}))$, from which   $-\tilde{f}(\bs(\hat{\mu})/(u,\tilde{\bs}))= -\tilde{f}(\hat{\bs}/(u,\tilde{\bs}))$ follows
so that $\bs(\hat{\mu})=\hat{\bs}$, i.e., $\bs(\hat{\mu})$ in which $\hat{\mu}$ is returned by BS is the global optimal solution of $P5$. 
\end{proof}

Using this proposition, a global optimal solution of $\bs$ can also be obtained for $P5$ under active IPC case.  Therefore, during each iteration of SCA algorithm, $\bs$ is optimized either via \eqref{optimal_inactive} or \eqref{optimal_activeIPC}. As the convergence of SCA algorithm  is reached, a first-order optimal $\bs$ given fixed $u$ in $P4$ can be obtained.

Hence,  once the optimal $u$ returned by BS  is obtained, the corresponding $\bs$ is a sub-optimal solution for original $P1$ given fixed $\bw$. The Dinkelbach based method in combination with SCA and PB algorithm for optimizing $\bs$ given $\bw$ in $P3$ is summarized as Algorithm 1. In this algorithm,  there are three loops from the outer layer to inner layer. The  outer-most loop is the BS algorithm for optimizing $u$, the second middle layer loop is the SCA algorithm for solving $P5$, and the inner layer loop is the BS algorithm for optimizing $\mu$ in $P5''$ if IPC is tight and active.

\begin{algorithm}[h]
	\caption{(\it Algorithm for optimizing $\bs$ in $P3$)}
	\begin{algorithmic}
		\State \bRequire\  $u_l$, $u_u$, $\mu_l$, $\mu_u$.
		\Repeat\ \ 
     \State 1.  Initialize  $\tilde{\bs}$, set $u=(u_l+u_u)/2$, start BS algorithm.
           \Repeat\ \ 
             \State 2.  If IPC is inactive, optimize $\bs$ according to \eqref{optimal_inactive}, and go to step 4, else go to step 3.
              \Repeat\ \ 
              \State 3. Set  $\mu=(\mu_l+\mu_u)/2$, optimize $\bs(\mu)$ via \eqref{optimal_activeIPC}. If $\tilde{g}(\bs(\mu)/\tilde{\bs})\geq \tilde{P}_I$, set $\mu_l=\mu$, otherwise set $\mu_u=\mu$.
               \Until{$|\mu_u-\mu_l|$ converges}
          \State 4. Set $\bs$ as new starting point $\tilde{\bs}$.
		\Until{ $\tilde{f}(\bs/(u, \tilde{\bs}))$ converges} 
		\State 5. Output $\bs$ and compute $f(\bs/u)$.
		\State 6. If $f(\bs/u)\geq 0$, set $u_l=u$, otherwise set $u_u=u$.
		\Until{$|u_u-u_l|$ converges}
	\end{algorithmic}
\end{algorithm}

Finally, since $\bw$ and $\bs$ are optimized alternatively in AO algorithm, 
$C(\bw_1,\bs_1)\leq C(\bw_2,\bs_2)\leq...\leq C(\bw_k,\bs_k)$ 
where $C(\bw_k,\bs_k)$ is the objective value of $P1$ and $\bw_k$, $\bs_k$ are the  solutions during the iteration $k$. Furthermore, since $\bw$ is bounded by the inequality TPC, and $\bs$ is bounded by the equality IPC,    $C(\bw_{k},\bs_{k})$ is guaranteed to converge to a limit point $C(\bw_{opt},\bs_{opt})$. In the AO algorithm, the main computational complexity on optimizing $\bw$ given $\bs$ is about $O(m^{3.5})$ \cite{Boyd-04}. Given $\bw$, the main computational complexity of  computing
$\gl_1(\tilde{\bH}_{IE}-u\tilde{\bH}_{IB})$ 
and $\gl_1(\tilde{\bH}_{IP})$ when optimizing $\bs$ are about $O(n^3)$, and the main computational complexity in each iteration of SCA with PB approach is about $O(n^2)$.

\section{Algorithm for maximizing secrecy rate under imperfect Eve's CSI}
In this section, we consider a second assumption that the CSI of  Eve is imperfectly known at Alice due to estimation errors. Denote  $\tilde{\bh}_{AE}$ and $\tilde{\bH}_E$ as the estimated channels and $\tilde{\Delta}_E$, $\tilde{\Delta}_{AE}$ as the estimation errors, then the direct and cascaded channels between Alice and Eve are modeled as  $\bar{\bh}_{AE}=\tilde{\bh}_{AE}+\tilde{\Delta}_{AE}$, $\bar{\bH}_E=\tilde{\bH}_E+\tilde{\Delta}_E$ so that $\bh_{AE}=\hat{\bh}_{AE}+\Delta_{AE}$,  $\bH_E=\hat{\bH}_E+\Delta_E$ where $\hat{\bh}_{AE}=\tilde{\bh}_{AE}/\sigma_E$, $\hat{\bH}_E=\tilde{\bH}_E/\sigma_E$,  $\Delta_{AE}=\tilde{\Delta}_{AE}/\sigma_E$, $\Delta_E=\tilde{\Delta}_E/\sigma_E$. In particular, we assume that the errors are bounded as $\parallel \Delta_E \parallel_F=\parallel \tilde{\Delta}_E \parallel_F/\sigma_E\leq\tilde{\epsilon}_E/\sigma_E=\epsilon_E$, $\parallel \Delta_{AE} \parallel=\parallel \tilde{\Delta}_{AE} \parallel/\sigma_E\leq\tilde{\epsilon}_{AE}/\sigma_E=\epsilon_{AE}$, where $\tilde{\epsilon}_E$ and $\tilde{\epsilon}_{AE}$ are the uncertainty region of estimation errors known at Alice. 
 Based on these settings, the corresponding secrecy rate maximization problem  can be formulated as follows.
 \bal
\notag
(P6) \ &\underset{\bs, \bw}{\max}\underset{\Delta_{E}, \Delta_{AE}}{\min}\ C_B-C_E\\  
\notag
&s.t. \ \eqref{TPC+IPC}, \eqref{UMC}, \parallel \Delta_E \parallel_F\leq\epsilon_E, \parallel \Delta_{AE} \parallel\leq\epsilon_{AE}.
\eal
Note that this is  also a difficult non-convex problem with non-convex IPC, UMC and infinite non-convex bounded estimation error constraints. Before solving this problem, we firstly transform the object function and the constraints to a more tractable form. By dropping the log function in both $C_B$ and $C_E$ and adding an auxiliary variable $\tau$, $P6$ can be  equivalently transformed to
\bal
\notag
(P7) \ &\underset{\bs, \bw, \tau, \Delta_E, \Delta_{AE}}{\max}\ \frac{1+|(\bh_{AB}+\bs^{\rm H}\bH_B)\bw|^2}{1+\tau}\\
\notag
 s.t. &\ \eqref{TPC+IPC}, \eqref{UMC},  \tau\geq 0, |(\bh_{AE}+\bs^{\rm H}\bH_E)\bw|^2\leq \tau,\\
\label{infinite_constraints}
&\parallel \Delta_E \parallel_F\leq\epsilon_E, \parallel \Delta_{AE} \parallel\leq\epsilon_{AE}.
\eal
Then,  by  adopting Schur’s complement \cite{Boyd-04}, the constraint  $|(\bh_{AE}+\bs^H\bH_E)\bw|^2\leq \tau$  as well as the IPC in \eqref{TPC+IPC} can be equivalently transformed to 
\bal
\label{Schur}
\begin{bmatrix}
 \tau& (\bh_{AE}+\bs^{\rm H}\bH_E)\bw\\ 
  \bw^{\rm H}(\bh_{AE}^{\rm H}+\bH_E^{\rm H}\bs)& 1
\end{bmatrix}\geq\bo,\\
\label{SchurIPC}
\begin{bmatrix}
 P_I& (\bh_{AP}+\bs^{\rm H}\bH_P)\bw\\ 
  \bw^{\rm H}(\bh_{AP}^{\rm H}+\bH_P^{\rm H}\bs)& 1
\end{bmatrix}\geq\bo.
\eal
Then, substituting $\bH_E=\hat{\bH}_E+\Delta_E$, $\bh_{AE}=\hat{\bh}_{AE}+\Delta_{AE}$ into \eqref{Schur} and after some manipulations, one obtains that
\bal
\notag
&\begin{bmatrix}
 \tau& (\hat{\bh}_{AE}+\bs^{\rm H}\hat{\bH}_E)\bw\\ 
  \bw^{\rm H}(\hat{\bh}_{AE}^{\rm H}+\hat{\bH}_E^{\rm H}\bs)& 1
\end{bmatrix}\\
\notag
+&
\begin{bmatrix}
\bo_{1\times m}\\ \bw^{\rm H}
\end{bmatrix}
\begin{bmatrix}
\Delta_{AE}^{\rm H} & \bo_{m \times 1}
\end{bmatrix}
\bI+
\bI\begin{bmatrix}
\Delta_{AE}\\ \bo_{1 \times m}
\end{bmatrix}
\begin{bmatrix}
\bo_{m \times 1} & \bw
\end{bmatrix}
\\
\notag
+&
\begin{bmatrix}
\bo_{1\times m}\\ \bw^{\rm H}
\end{bmatrix}\Delta_E^{\rm H}
\begin{bmatrix}
\bs & \bo_{n \times 1}
\end{bmatrix}+
\begin{bmatrix}
\bs^{\rm H}\\ \bo_{1\times n}
\end{bmatrix}\Delta_E
\begin{bmatrix}
\bo_{m \times 1} & \bw
\end{bmatrix}\geq \bo.
\eal
To make further manipulations, we apply the following key lemma of general sign-definiteness principle, from which the proof can be seen in \cite{E-13}.
\begin{lemma}
Given matrices $\bA_i$, $\bB_i$, $i=1,2,...n$, and $\bY=\bY^{\rm H}$, the linear matrix inequality (LMI) 
$\bY\geq \sum_{i=1}^{n}(\bA_i^{\rm H}\bX_i\bB_i+\bB_i^{\rm H}\bX_i\bA_i), \forall i, \parallel\bX_i\parallel_F\leq\epsilon_i$  holds
only if there exists $u_i\geq0$, $\forall i$ such that
\bal
\notag
\begin{bmatrix}
 \bY-\sum_{i=1}^{n}u_i\bB_i^{\rm H}\bB_i&  -\epsilon_1\bA_1^{\rm H}& \cdots & -\epsilon_n\bA_n^{\rm H}\\ 
 -\epsilon_1\bA_1&  u_1\bI&  \cdots & \bo\\ 
 \vdots& \vdots &  \ddots& \vdots\\ 
 -\epsilon_n\bA_n& \bo & \cdots & u_n\bI
\end{bmatrix}\geq \bo.
\eal
\end{lemma}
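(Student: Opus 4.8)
The plan is to collapse the semi-infinite LMI into a single robust scalar inequality, eliminate the perturbations $\bX_i$ in closed form, and then recognise what survives as a lossless S-procedure that is certified exactly by the block LMI in the statement, with the Schur complement doing the final rewriting.

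The first step is to test the inequality against an arbitrary vector $\bz$. The perturbation term $\sum_i(\bA_i^{\rm H}\bX_i\bB_i+\bB_i^{\rm H}\bX_i^{\rm H}\bA_i)$ is Hermitian, so the LMI holds for all admissible $\bX_i$ if and only if $\bz^{\rm H}\bY\bz\geq 2\sum_i Re\{(\bA_i\bz)^{\rm H}\bX_i(\bB_i\bz)\}$ for every $\bz$ and every $\bX_i$ with $\parallel\bX_i\parallel_F\leq\epsilon_i$. For fixed $\bz$ the maximisation of the right-hand side decouples across the blocks, and since $Re\{(\bA_i\bz)^{\rm H}\bX_i(\bB_i\bz)\}=Re\{tr(\bX_i(\bB_i\bz)(\bA_i\bz)^{\rm H})\}$, the Cauchy--Schwarz inequality for the Frobenius inner product gives $\max_{\parallel\bX_i\parallel_F\leq\epsilon_i}Re\{(\bA_i\bz)^{\rm H}\bX_i(\bB_i\bz)\}=\epsilon_i\parallel\bA_i\bz\parallel\parallel\bB_i\bz\parallel$, attained at the rank-one $\bX_i$ proportional to $(\bB_i\bz)(\bA_i\bz)^{\rm H}$. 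Hence the semi-infinite LMI is equivalent to the scalar condition $\bz^{\rm H}\bY\bz\geq 2\sum_i\epsilon_i\parallel\bA_i\bz\parallel\parallel\bB_i\bz\parallel$ for all $\bz$.

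Next I would introduce the multipliers through the identity $2\epsilon_i\parallel\bA_i\bz\parallel\parallel\bB_i\bz\parallel=\min_{u_i>0}\big(u_i\parallel\bB_i\bz\parallel^2+(\epsilon_i^2/u_i)\parallel\bA_i\bz\parallel^2\big)$, whose minimiser is $u_i=\epsilon_i\parallel\bA_i\bz\parallel/\parallel\bB_i\bz\parallel$. The easy (``if'') direction is then immediate: if there exist $u_i\geq 0$ with $\bY-\sum_i u_i\bB_i^{\rm H}\bB_i-\sum_i(\epsilon_i^2/u_i)\bA_i^{\rm H}\bA_i\geq\bo$, then the AM--GM bound turns this into the scalar condition above for every $\bz$, hence into the robust LMI; and a Schur complement of that PSD inequality with respect to the diagonal blocks $u_i\bI$ (the boundary $u_i=0$ being a limiting case that forces the corresponding $\bA_i$ to vanish on the relevant subspace) rewrites it as exactly the block LMI in the statement. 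The substantive direction is the converse, which requires pulling ``$\exists\,u_i$'' outside ``$\forall\,\bz$'': since the minimiser displayed above depends on $\bz$, one must produce a single tuple $(u_1,\dots,u_n)$, independent of $\bz$, with $\bz^{\rm H}\big(\bY-\sum_i u_i\bB_i^{\rm H}\bB_i-\sum_i(\epsilon_i^2/u_i)\bA_i^{\rm H}\bA_i\big)\bz\geq 0$ for all $\bz$. This is a lossless S-procedure for the product-of-Frobenius-balls uncertainty set; for a single block ($n=1$) it is Petersen's lemma, provable by separating a hyperplane from the (convex) cone $\{(\bz^{\rm H}\bY\bz,\ \parallel\bA_1\bz\parallel^2,\ \parallel\bB_1\bz\parallel^2):\bz\}$, and the multi-block statement then follows by exploiting that the uncertainty decouples across the blocks.

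The step I expect to be the real obstacle is precisely this converse -- showing the multipliers can be chosen uniformly in $\bz$. It is not a formal manipulation: the analogous claim is false for generic coupled uncertainties, and it succeeds here only because the perturbation enters block-separably, which is what keeps the relevant S-procedure exact (modulo the standard non-degeneracy caveats). This is the genuine content of Lemma~3 and the reason the statement merely cites \cite{E-13}.
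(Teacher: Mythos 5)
You are proving a statement the paper itself never proves: the lemma is quoted from \cite{E-13} and the paper's ``proof'' is the citation, so there is no in-paper argument to compare against. Judged on its own, your proposal is half right. The reduction to the scalar robust condition $\bz^{\rm H}\bY\bz\ge\sum_{i}2\epsilon_i\parallel\bA_i\bz\parallel\,\parallel\bB_i\bz\parallel$ for all $\bz$ is correct (the Frobenius-ball maximization is indeed attained at a rank-one matrix, up to the harmless slip that the maximizer is proportional to $(\bA_i\bz)(\bB_i\bz)^{\rm H}$, the conjugate transpose of what you wrote), and your ``if'' direction is complete: AM--GM plus a Schur complement over the diagonal blocks $u_i\bI$ (with $u_i=0$ forcing $\bA_i=\bo$) shows that feasibility of the block LMI implies the semi-infinite inequality. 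That, incidentally, is the only direction the paper actually needs when it replaces the infinite constraints by \eqref{LMIbound}.

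The gap is exactly where you yourself locate it, and your closing sentence does not close it. For the direction the lemma literally asserts (robust LMI $\Rightarrow$ existence of the $u_i$), the $n=1$ case is Petersen's lemma and can be made rigorous, though more easily by lifting to $(\bz,\bv_1)$ with the single quadratic constraint $\parallel\bv_1\parallel^2\le\parallel\bB_1\bz\parallel^2$ and invoking the lossless one-constraint S-procedure than by your separating-hyperplane sketch, which additionally needs a nontrivial joint-numerical-range convexity fact. But ``the multi-block statement then follows by exploiting that the uncertainty decouples across the blocks'' is an assertion, not an argument: after the same lifting one faces an S-procedure with $n$ constraints $\parallel\bv_i\parallel^2\le\parallel\bB_i\bz\parallel^2$, the per-$\bz$ optimal multipliers $u_i(\bz)=\epsilon_i\parallel\bA_i\bz\parallel/\parallel\bB_i\bz\parallel$ genuinely depend on $\bz$, and producing one $\bz$-independent tuple is a losslessness (structured D-scaling exactness) claim that is known for one constraint, delicate already for two, and not valid for arbitrarily many full blocks in general; block-separability of the uncertainty does not by itself rescue it. Nor does induction on blocks work: peeling off block $n$ yields a multiplier $u_n$ that depends on the remaining $\bX_i$, i.e.\ ``$\forall\bX\,\exists u$'' where ``$\exists u\,\forall\bX$'' is required. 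So the only nontrivial direction of the statement (its ``only if'') remains unproved in your proposal; that is precisely the content the paper delegates wholesale to \cite{E-13}, and any honest write-up should either reproduce a correct argument for it or, as the application only requires, state and use just the sufficiency direction you did prove.
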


Using this lemma, by introducing $u_1$ and $u_2$, and combing $\parallel \Delta_E \parallel_F\leq\epsilon_E, \parallel \Delta_{AE} \parallel\leq\epsilon_{AE}$, we obtain the LMI as
\bal
\notag
&\begin{bmatrix}
\tau-u_1n-u_2&  (\hat{\bh}_{AE}+\bs^{\rm H}\hat{\bH}_E)\bw& \bo_{1\times m} & \bo_{1\times m}\\ 
 \bw^{\rm H}(\hat{\bh}_{AE}^{\rm H}+\hat{\bH}_E^{\rm H}\bs)& 1-u_2&  \epsilon_E\bw^{\rm H} & \epsilon_{AE}\bw^{\rm H}\\ 
  \bo_{m\times 1}& \epsilon_E\bw &  u_1\bI&\bo_{m\times m}\\ 
 \bo_{m\times 1}& \epsilon_{AE}\bw & \bo_{m\times m} & u_2\bI
\end{bmatrix}\\
\label{LMIbound}
&\geq \bo.
\eal
Therefore, using these aforementioned manipulations, the infinite non-convex constraints in \eqref{infinite_constraints} and $|(\bh_{AE}+\bs^{\rm H}\bH_E)\bw|^2\leq \tau$ can be equivalently integrated to a single convex constraint \eqref{LMIbound} so that $P7$ can be equivalently transformed to 
\bal
\notag
(P8) \ &\underset{\bs, \bw, \tau, u_1, u_2}{\max}\ \frac{1+|(\bh_{AB}+\bs^{\rm H}\bH_B)\bw|^2}{1+\tau}\\
\notag
 &s.t. \ \parallel\bw\parallel^2\leq P_T , \eqref{UMC}, \eqref{SchurIPC}, \eqref{LMIbound}, \tau\geq 0, u_1\geq0, u_2\geq0.
\eal
We note that in $P8$, the secrecy rate is determined only by  $\bw, \bs$ and $\tau$, which are difficult  to optimize simultaneously. To optimize $P8$, in this paper, we firstly  obtain a proper region of $\tau$ numerically, and then apply AO algorithm to jointly optimize $\bw$ and $\bs$ by fixing $\tau$. Lastly, the optimal $\tau$ corresponding to the largest objective value in $P8$ can be found via searching algorithm in its region.

Specifically, the upper bound of $\tau$ can be formulated as 
\bal
\label{tau_area}
\tau\leq |(\bh_{AB}+\bs^{\rm H}\bH_B)\bw|^2\leq P_TJ(\bs)
\eal
where 
$J(\bs)=\parallel\bh_{AB}+\bs^{\rm H}\bH_B\parallel^2
=\bh_{AB}\bh_{AB}^{\rm H}+2Re\{\bs^{\rm H}\bH_{B}\bh_{AB}^{\rm H}\}+\bs^{\rm H}\bH_B\bH_B^{\rm H}\bs$.
In \eqref{tau_area}, the first inequality holds since the secrecy rate is non-negative,  and the second inequality follows from the lemma \cite{Li-13}: for any  $\bW\geq\bo$ and $tr(\bW)\leq P_T$, $\bh\bW\bh^{\rm H}\leq tr(\bW)\parallel\bh\parallel^2$ holds. Hence, the work reduces to maximize $J(\bs)$, which corresponds to the following problem
\bal
\notag
(P9) \ \underset{\bs}{\min}\ \bs^{\rm H}(-\bH_B\bH_B^{\rm H})\bs-2Re\{\bs^{\rm H}\bH_{B}\bh_{AB}^{\rm H}\}\ s.t.\ \eqref{UMC}.
\eal
  $P9$ can be maximized via SCA algorithm, in which the objective function can be approximated to a linear formula using Lemma 1 so that  a first-order optimal closed-form solution of $\bs$ can be obtained. Hence, assume that the optimized solution for $P9$ is $\bs_{opt}$,  the upper bound of $\tau$ can be finally obtained as $P_TJ(\bs_{opt})$. Therefore, $P8$ can be equivalently expressed as
\bal
\label{varphi}
\underset{\tau}{\max} \ \varphi(\tau) \ s.t.\  0\leq\tau\leq P_TJ(\bs_{opt})
\eal
where $\varphi(\tau)$ is defined as
\bal
\notag
(P10)&\ \varphi(\tau)\triangleq\underset{\bs, \bw, u_1, u_2}{\max}\ \frac{1+|(\bh_{AB}+\bs^{\rm H}\bH_B)\bw|^2}{1+\tau}\\ 
\notag
&s.t. \ \parallel\bw\parallel^2\leq P_T, \eqref{UMC}, \eqref{SchurIPC}, \eqref{LMIbound}, u_1\geq0, u_2\geq0.
\eal
To optimize $\bw$ and $\bs$ given $\tau$ in $P10$, we firstly fix $\bs$ and optimize $\bw$ in the following sub-problem
\bal
\notag
(P11)&\ \underset{\bw, u_1, u_2}{\max}\ |(\bh_{AB}+\bs^{\rm H}\bH_B)\bw|^2\\
\notag
 &s.t. \ \parallel\bw\parallel^2\leq P_T, \eqref{SchurIPC}, \eqref{LMIbound}, u_1\geq0, u_2\geq0.
\eal
Note that the objective function in $P11$ is not concave respect to $\bw$, however, it can be approximated via SCA. By applying the lemma that for any complex $x$ and $\tilde{x}$, 
\bal
\label{approximation}
|x|^2\geq 2Re\{\tilde{x}^*x\}-\tilde{x}^*\tilde{x}, 
\eal
$|(\bh_{AB}+\bs^{\rm H}\bH_B)\bw|^2$ can be lower bounded by
$2Re\{(\bh_{AB}+\bs^{\rm H}\bH_B)\tilde{\bw}\bw^{\rm H}(\bh_{AB}^{\rm H}+\bH_B^{\rm H}\bs)\}
 -(\bh_{AB}+\bs^{\rm H}\bH_B)\tilde{\bw}\tilde{\bw}^{\rm H}(\bh_{AB}^{\rm H}+\bH_B^{\rm H}\bs)$ 
where $\tilde{\bw}$ is a feasible point. Therefore, after dropping the constant term, the problem transforms to
\bal
\notag
(P11')&\ \underset{\bw, u_1, u_2}{\max}\ Re\{\bw^{\rm H}(\bh_{AB}^{\rm H}+\bH_B^{\rm H}\bs)(\bh_{AB}+\bs^{\rm H}\bH_B)\tilde{\bw}\}
\\
\notag
 &s.t.\  \parallel\bw\parallel^2\leq P_T, \eqref{SchurIPC}, \eqref{LMIbound}, u_1, u_2\geq0,
\eal
which can be directly solved via CVX solver since the objective function and all the constraints are convex. Hence, by solving $P10$ via SCA until convergence, a first-order optimal solution of $\bw$ given $\bs$ can be obtained.

The next step is to optimize $\bs$ given $\bw$ when fixing $\tau$ in the following problem
\bal
\notag
(P12) &\underset{\bs, u_1, u_2}{\min}\  r(\bs)=-|(\bh_{AB}+\bs^{\rm H}\bH_B)\bw|\\
\notag
 &s.t.\ \eqref{SchurIPC}, \eqref{LMIbound}, u_1\geq0, u_2\geq0, \eqref{UMC}.
\eal
Different from $P11$, $P12$  is difficult to solve, since although the objective function $r(\bs)$  can be approximated to linear expression, the non-convex  strict equality \eqref{UMC} can not be approximated.  Therefore, inspired by  \cite{Lipp-15}, we propose a P-CCP approach  to optimize $\bs$ in $P12$, in which the key idea is to  relax the problem by adding slack variables so as to make \eqref{UMC}  to be violated, and then penalizing the sum of the violations.  Note that a  significant difference between P-CCP and the aforementioned PB approach for solving $P5$ is that  in P-CCP, the non-convex \eqref{UMC} is relaxed to a convex one so that the optimized $|s_i|$ may not equal but infinitely close to 1, but in PB,   \eqref{UMC} is unchanged so that the solution $|s_i|=1$ always holds. 

Firstly, following \eqref{approximation}, 
$r(\bs)$ is upper bounded by 
\bal
\notag
r(\bs)\leq\tilde{r}(\bs,\tilde{\bs})= &-2Re\{(\bh_{AB}+\tilde{\bs}^{\rm H}\bH_B)\bw\bw^{\rm H}(\bh_{AB}^{\rm H}+\bH_B^{\rm H}\bs)\}\\
\notag
&+(\bh_{AB}+\tilde{\bs}^{\rm H}\bH_B)\bw\bw^{\rm H}(\bh_{AB}^{\rm H}+\bH_B^{\rm H}\tilde{\bs}).
\eal
Then,  by introducing auxiliary real vectors $\bb=[b_1, b_2,..., b_n]^T$, $\bc=[c_1, c_2,..., c_n]^T$ (where $b_i, c_i>0,  \forall i$), we formulate the following constraint
\bal
\label{violate}
 1-b_i\leq |s_i|^2\leq 1+c_i, \forall i=1,2,...,n.
\eal
so  that the feasible set of  $|s_i|$ changes to a continuous region. Although this violates the strict equality UMC, the sum of the violations $\sum_{i=1}^{n}b_i^k+ \sum_{i=1}^{n}c_i^k$ can be penalized so that the value of $b_i$ and $c_i$ ($i=1,2,...,n$) is guaranteed to converge to zero. Hence,  $|s_i|$  returned by P-CCP is infinitely close to 1. 

Note that \eqref{violate} can be further divided  into a convex constraint $ |s_i|^2\leq 1+c_i$ and non-convex constraint $|s_i|^2\geq 1-b_i$. The non-convex one can also be approximated (according to \eqref{approximation}) as $|s_i|^2\geq 2Re\{\tilde{s}_i^*s_i\}-|\tilde{s}_i|^2\geq 1-b_i$ where $\tilde{s}_i$ is a feasible point. 
Based on the key concept of P-CCP \cite{Lipp-15}, we introduce a penalty parameter $\gamma$ and construct the following optimization problem
\bal
\notag
(P12') &\underset{\bs, u_1, u_2, \bb, \bc}{\min}\  \tilde{r}(\bs,\tilde{\bs})+\gamma(\sum_{i=1}^{n}b_i+ \sum_{i=1}^{n}c_i)\\ 
\notag
s.t.&\ \eqref{SchurIPC}, \eqref{LMIbound}, u_1\geq0, u_2\geq0, b_i\geq 0, c_i\geq 0, \\
\notag
&2Re\{\tilde{s}_i^*s_i\}-|\tilde{s}_i|^2\geq 1-b_i,  |s_i|^2\leq 1+c_i, \forall i.
\eal
Therefore, given fixed $\gamma$, $P12'$ is a convex problem which can be directly solved via CVX. The algorithm of P-CCP for optimizing $\bs$ in $P12$ is summarized as Algorithm 2. In this algorithm, $\gamma$ is refreshed during each iteration, which is used to scale the impact of the penalty term $\sum_{i=1}^{n}b_i+ \sum_{i=1}^{n}c_i$ and control the feasibility of the constraints. And the upper limit $\gamma_{max}$  is used to avoid numerical problems if $\gamma$ grows too large and to provide convergence if a feasible region is not found. Following \cite{Lipp-15}, Algorithm 2 is not a strictly descent algorithm (mostly when $\gamma<\gamma_{max}$), but the value of objective function $r(\bs)+\gamma(\sum_{i=1}^{n}b_i+ \sum_{i=1}^{n}c_i)$ is always guaranteed to converge. The reason is that as $\gamma$ reaches to $\gamma_{max}$,  
\bal
\sum_{i=1}^{n}b_i^k+ \sum_{i=1}^{n}c_i^k \approx 0
\eal
where $b_i^k$ and $c_i^k$ are optimized at iteration $k$. Therefore, $b_i$ and $c_i$  both converge to zero so that the objective value of $P12'$ is only affected by  $\tilde{r}(\bs,\tilde{\bs})$. Furthermore, since $r(\bs)$ is also approximately upper bounded by $\tilde{r}(\bs,\tilde{\bs})$, one obtains that
\bal
\notag
&(r(\bs_k)+\gamma_{max}(\sum_{i=1}^{n}b_i^k+ \sum_{i=1}^{n}c_i^k))\\
\notag
-&(r(\bs_{k+1})+\gamma_{max}(\sum_{i=1}^{n}b_i^{k+1}+ \sum_{i=1}^{n}c_i^{k+1}))\leq \epsilon_{violation}
\eal
can be achieved where $\epsilon_{violation}$ is the target accuracy for convergence.
As the convergence is reached, the optimized solution satisfying \eqref{violate}  (where $b_i, c_i \approx 0$) returned by Algorithm 2 can be as an approximate first-order  optimal solution for the original problem $P12$.

\begin{algorithm}[h]
	\caption{(\it P-CCP algorithm for solving $P12$)}
	\begin{algorithmic}
		\State \bRequire\  $\gamma>0$, $\gamma_{max}>0$, $t>1$, $\tilde{\bs}$, set $k=0$.
		\Repeat\ \ 
     \State 1. Set $k=k+1$, optimize $\bs_{k}$ in the convex optimization problem $P12'$ given $\tilde{\bs}$.
         \State 2. Update the penalty parameter: $\gamma=\min(t\gamma, \gamma_{max})$.
          \State 3. Set $\tilde{\bs}=\bs_{k}$.
		\Until{The objective value $ r(\bs_k)+\gamma(\sum_{i=1}^{n}b_i^k+ \sum_{i=1}^{n}c_i^k)$ converges} 
	\end{algorithmic}
\end{algorithm}

Hence, using AO algorithm to jointly optimize $\bw$ and $\bs$ in the sub-problem $P11$ and $P12$ until converge, a limit point solution for the original  problem $P10$ can be obtained. Since $\tau$ lies in the interval $[0, P_TJ(\bs_{opt})]$, 
 the optimization problem in \eqref{varphi} can be solved by performing one-dimensional line search (such as uniform sampling or golden search \cite{Li-13}) over $\tau$, and choosing the optimal one $\tau_{opt}$ that achieves the maximum value of objective function.  In this paper, we apply uniform sampling method to search for  $\tau_{opt}$. Based on our extensive simulations, the searched point $\tau_{opt}$ is usually a very small value ranging from 0 to 1 in most channel realizations. Once  $\tau_{opt}$ is found, the corresponding $\bw$ and $\bs$ returned by AO algorithm is a final  limit point solution of original problem $P6$.  In this algorithm, given fixed $\tau$, the main computational complexity of optimizing $\bw$ given $\bs$ in each iteration of SCA and optimizing $\bs$ given $\bw$ in each iteration of P-CCP are $O((m+2)^{2})$ and $O((3n+2)^{2})$ respectively. And the main computational complexity of solving $P9$ via SCA in each iteration is about $O(n^2)$.

 In addition, we point out that when the CSI of $\bH_P$ and $\bh_{AP}$ between Alice and PR link are also imperfectly known due to bounded estimation errors as with the settings for $\bH_E$ and $\bh_{AE}$, our proposed AO algorithm shown above also can be extended to enhance the secrecy rate under this condition. This can be easily achieved by applying Schur's complement and Lemma 3 so that the infinite non-convex constraints can be directly transformed to single convex LMI constraint by adding auxiliary variables. Hence, AO algorithm can be directly applied to jointly optimize the beamformer and phase shift.

\section{AN aided Solution for enhancing secrecy rate under no Eve's CSI}

In this section, we consider a third assumption that the Eve's CSI is completely unknown at Alice so that both the direct channel $\bar{\bh}_{AE}$ and cascaded channel $\bar{\bH}_E$ cannot be obtained. Compared with the full and imperfect Eve's CSI cases in the previous sections, no Eve's CSI is a more practical assumption since Eve is usually a hidden passive malicious user, and it does not actively exchange its CSI with Alice. Therefore, it is unlikely to formulate an optimization problem as $P1$ or $P6$ to maximize the secrecy rate. Although the transmission rate at Bob  can be optimized given full CSI of Bob and PR, the information leakage at Eve may be large so that $C_B<C_E$, i.e.,  secure communication may not achievable according to the key concept of information-theoretic PLS.

Therefore, following our previous work for the non-CR setting in \cite{Dong-20d}\cite{Dong-20c}, in this paper, we propose an AN aided scheme to enhance the secrecy rate under no Eve's CSI. To achieve secure communication (i.e., $C_B>C_E$), the key idea of this scheme is summarized as two aspects. Firstly, we optimize a minimum power $P_S$ subject to the IPC at PR and a  QoS constraint at Bob in which the signal-to-noise ratio (SNR) meets the lowest pre-defined threshold. Secondly, we use the residual power $P_T-P_S$ available at Alice to transmit AN signals so as to decrease the SNR at Eve.

Based on this aforementioned setting, the signals received at Bob and Eve are represented as
\bal
\notag
&y_B= (\bar{\bh}_{AB}+\bh_{IB}diag(\bs^*)\bar{\bH}_{AI})(\bw x+\bz)+\xi_B, \\
\notag
&y_E= (\bar{\bh}_{AE}+\bh_{IE}diag(\bs^*)\bar{\bH}_{AI})(\bw x+\bz)+\xi_E
\eal
respectively where $\bz$ is the AN signals. In order to determine how much power of the total power $P_T$ can be dominated for transmitting the information signal $x$ satisfying the QoS constraint at Bob, we formulate a power minimization problem   as
\bal
\notag
(P13)\ \underset{\bw,\bs}{\min}\ \parallel\bw\parallel^2\ s.t.\ \eqref{SchurIPC}, |(\bh_{AB}+\bs^{\rm H}\bH_{B})\bw|^2\geq T, \eqref{UMC},
\eal
where $T$ is the pre-defined lowest SNR threshold at Bob. To solve this problem, we apply AO algorithm again to jointly optimize $\bw$ and $\bs$ in two sub-problems.  

Firstly, when $\bs$ is fixed, the sub-problem for optimizing $\bw$ is expressed as
\bal
\notag
&(P14)\ \underset{\bw}{\min}\ \parallel\bw\parallel^2\\
 \notag
&s.t.\ |(\bh_{AP}+\bs^{\rm H}\bH_{P})\bw|^2\leq P_I, |(\bh_{AB}+\bs^{\rm H}\bH_{B})\bw|^2\geq T.
\eal
Note that except for QoS constraint $|(\bh_{AB}+\bs^{\rm H}\bH_{B})\bw|^2\geq T$, the objective function and IPC are all convex in $\bw$. Using \eqref{approximation}, the QoS constraint  can be approximated to a linear formular   so that a first-order optimal $\bw$ can be obtained via SCA algorithm. Furthermore, it is straightforward to obtain that the  solution of $\bw$  makes the QoS constraint always hold with equality.

Secondly, when $\bw$ is fixed, the sub-problem of optimizing $\bs$ is expressed as
\bal
\notag
&\quad(P15)\  \rm{Find}\ \bs\quad s.t.\\
\notag
 &|(\bh_{AP}+\bs^{\rm H}\bH_{P})\bw|^2\leq P_I, |(\bh_{AB}+\bs^{\rm H}\bH_{B})\bw|^2\geq T, \eqref{UMC}.
\eal
Note that there is no objective function in this problem, and  any feasible $\bs$ satisfying the  QoS and UMC  can be as the optimal solution. In fact, if the feasible solution $\bs$ obtained for $P15$ achieves a strictly larger SNR than the target $T$, then the minimum transmit power in $P14$ returned by SCA can be properly reduced without violating the QoS constraint. Hence, the work can be reduced to maximize  the SNR at Bob  as large as possible in the  following problem $P15'$.
\bal
\notag
&(P15')\  \underset{\bs}{\max}\ |(\bh_{AB}+\bs^{\rm H}\bH_{B})\bw|^2\\
\notag
 &s.t.\  \eqref{UMC}, |(\bh_{AP}+\bs^{\rm H}\bH_{P})\bw|^2\leq P_I.
\eal 
Using the key idea of solving $P4$ given fixed $u$, $P15'$ also can be solved via SCA so as to obtain a first-order optimal solution, in which the objective function can be approximated via \eqref{approximation}  given a feasible point $\tilde{\bs}$ and the IPC $|(\bh_{AP}+\bs^{\rm H}\bH_{P})\bw|^2\leq P_I$ can be approximated via Lemma 1. And during each iteration of SCA algorithm, the global optimal solution of $\bs$ given $\tilde{\bs}$ can be obtained analytically under inactive IPC case and via the PB approach  under active  IPC case.  

Collectively, since $\bw$ and $\bs$ are optimized alternatively, the objective value $\parallel\bw\parallel^2$ is non-increasing with iteration $k$, i.e.,
$\parallel\bw_1\parallel^2\geq \parallel\bw_2\parallel^2\geq ... \geq \parallel\bw_k\parallel^2$.
Furthermore, since $\bw$ and $\bs$ are both bounded by the constraints, a limit point solution of $\bw$ and $\bs$ for $P13$ can be obtained as the AO algorithm converges. For optimizing $\bw$ given $\bs$ and optimizing $\bs$ given $\bw$ in each iteration of SCA, the main computational complexity are $O(m^2)$ and $O(n^2)$ respectively.

After obtaining the solution $\bw$, we obtain the  minimum power $P_S=\parallel\bw\parallel^2$ used for information signal transmission, then the residual  power $P_T-P_S$ is used to send AN signal $\bz$ to jam Eve. Since the Eve's channels are unknown at Alice, it is unlikely to optimize the transmit covariance of AN signal $\bR_{AN}=E\{\bz\bz^{\rm H}\}$. Therefore, we apply equal power allocation strategy to transmit AN signals to each dimension of $\mathcal{N}(\bH_{eff})$ where
\bal
\notag
\bH_{eff}\in\mathbb{C}^{m\times m}=\begin{bmatrix}
\bar{\bh}_{AB}+\bs^{\rm H}\bar{\bH}_B\\ \bh_{AP}+\bs^{\rm H}\bH_P
\end{bmatrix}^{\rm H}\begin{bmatrix}
\bar{\bh}_{AB}+\bs^{\rm H}\bar{\bH}_B\\ \bh_{AP}+\bs^{\rm H}\bH_P
\end{bmatrix} 
\eal
with $rank(\bH_{eff})=2$ so that the transmit covariance $\bR_{AN}$ is expressed as 
\bal
\notag
\bR_{AN}=\frac{P_T-P_S}{m-2}\bU_{AN}\bU_{AN}^{\rm H}
\eal
where the columns in the semi-unitary matrix $\bU_{AN}$ are all $m-2$ eigenvectors corresponding to zero eigenvalues of $\bH_{eff}$. Using this way, the AN generated by Alice could only interfere Eve, and does not affect the  quality of communication at both Bob and PR so that the QoS constraint at Bob and IPC at PR can be satisfied simultaneously.
Hence,  given target SNR threshold $T$, the  actual achievable secrecy rate $C_s$ by this scheme at Bob is finally expressed as 
\bal
\notag
&C_s=\log_2(1+T)\\
\notag
&-\log_2|1+\frac{|(\bar{\bh}_{AE}+\bs^{\rm H}\bar{\bH}_{E})\bw|^2}{\sigma_E^2+(\bar{\bh}_{AE}+\bs^{\rm H}\bar{\bH}_{E})\bR_{AN}(\bar{\bh}_{AE}+\bs^{\rm H}\bar{\bH}_{E})^{\rm H}}|.
\eal
Based on our extensive numerical simulations, a positive secrecy rate $C_s>0$ can be achieved via the proposed scheme in most channel realizations given finite $T$ and hence secure communication can be guaranteed.

\section{Simulation Results}


To validate the performance of our proposed numerical solutions,  extensive simulation results have been carried out in this section. Following \cite{Cui-19}, all the channels are assumed to be
independent Rayleigh fading, and each channel vector and matrix are  formulated as   the product of large scale fading and small scale fading.   For the location of each node, we consider a three dimensional coordinate space, and let Alice, Bob, IRS to be fixed in a coordinate, Eve and PR are randomly located in  certain areas (see Fig.1).  For all the AO algorithms illustrated in the paper, the starting point of $\bw$ and $\bs$ are randomly generated satisfying the corresponding constraints in each optimization problem. And the sampling interval for finding the optimal $\tau$ in the feasible interval $[0, P_TJ(\bs_{opt})]$  are set as $10^{-2}$.
Some main  parameter settings of channels, locations of each node as well as algorithms for the simulation are summarized in Table 1.  All the results plotted in Fig.2 to Fig.5 are averaged over 100   channel realizations, and all the results plotted in Fig.6 to Fig.9 are generated based on single   channel realization.

\begin{table*}[htbp]
\caption{Summary of Parameter Settings for Simulation}
\small 
\begin{tabular}{p{3 cm}|p{12.5cm}}
\hline
\hline
\rule{0pt}{10pt}\quad\quad\textbf{Symbol} & \quad\quad\quad\quad\quad\quad\quad\textbf{Definition and setting} \\
\hline
\rule{0pt}{10pt}\centering $\alpha_{AB}$, $\alpha_{AE}$, $\alpha_{AP}$ & The path loss exponents of  Alice-Bob, Alice-Eve and Alice-PR links respectively, all of them are set as 3. \\
\hline
\rule{0pt}{10pt}\centering $\alpha_{AI}$, $\alpha_{IB}$, $\alpha_{IE}$, $\alpha_{IP}$ &  The path loss exponents of  Alice-IRS, IRS-Bob, IRS-Eve and IRS-PR links respectively, all of them are set as 2.5.\\
\hline
\rule{0pt}{10pt}\centering $(0, 0, 0)$, $(100, 0, 0)$, $(50, 0, 50)$ & The  coordinates of Alice, Bob and IRS respectively, their locations are all fixed.\\
\hline
\rule{0pt}{10pt}\centering $(d_{x1}, d_{y1}, 0)$, $(d_{x2}, d_{y2}, 0)$ & The  coordinates of Eve and PR respectively.  PR is randomly  located in a circle with Alice as the center and a radius of 50 meters, i.e. $d_{x1}, d_{y1}\in[-50, 50]$. Eve is randomly located in a  circle with Bob as the center and a radius of 50 meters, i.e., $d_{x2}\in[50, 150]$,  $d_{y2}\in[-50, 50]$.\\
\hline
\rule{0pt}{10pt}\centering $\epsilon_{BS}$, $\epsilon_{AO}$, $\epsilon_{SCA}$& The target accuracy of all the bisection search algorithms,  AO algorithms and SCA algorithms, respectively. All these accuracy are set as $10^{-3}$. \\
\hline
\rule{0pt}{10pt}\centering  $t$ & The update of $\gamma$ in P-CCP algorithm, $t=5$.\\
\hline
\rule{0pt}{10pt}\centering $\gamma$, $\gamma_{max}$ & $\gamma=10$ and $\gamma_{max}=10^3$ are the initial and maximum value of penalty parameter respectively in P-CCP algorithm.\\
\hline
\rule{0pt}{10pt}\centering $\sigma_B^2$, $\sigma_E^2$ & The noise power at Bob and Eve respectively, they are both set as -100dBm.\\
\hline
\hline
\end{tabular}
\end{table*}

Fig.2 shows the  average secrecy rate performance returned by our proposed AO algorithm for the IRS-assisted design under full Eve's CSI. We also compare the results with those by other benchmark schemes: 1).
only optimizing $\bw$ given random $\bs$ at IRS; 2).optimal solutions in \cite{Pei-10} without IRS; 3).AN aided solutions in \cite{Fang-15} without IRS. Observe that the  performance via the proposed algorithm is significantly better than the those  benchmark schemes without IRS. The main reason is that IRS helps adjust the propagation channels, and via jointly optimizing $\bs$ and $\bw$, the signals transmitted via the direct link Alice-Bob (Eve) and reflecting link Alice-IRS-Bob (Eve) can be constructively (destructively) added at Bob (Eve), thereby boosting the secrecy rate. Although the random phase shift scheme has better performance than those without IRS, it still achieves significantly less performance than the proposed one. This indicates that only by the signal cooperation between Alice and IRS can the better performance gain be achieved.

\begin{figure}[t]
	\centerline{\includegraphics[width=3.0in]{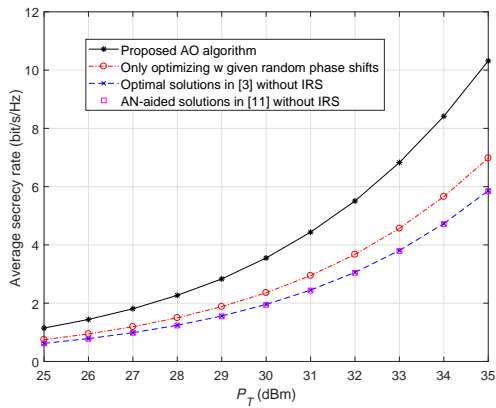}}
	\caption{The average secrecy rate performance versus $P_T$ returned by proposed algorithm with other benchmark schemes under full Eve's CSI.  $P_I$ is fixed at 30dBm, $m=4, n=8$.}
\end{figure}

To further exploit how the secrecy rate changes with $P_T$ as PR is located close to Alice under IRS-assisted and no IRS cases,   we set the location of PR  at $(20,0,0)$  between Alice and Bob so that the direct link of channels $\bh_{AB}$, $\bh_{AP}$ are strongly correlated, and the quality of $\bh_{AP}$ is better than $\bh_{AB}$ (since the signals transmitted via the channel $\bh_{AP}$ suffers from less attenuation), and we show the average secrecy rate as well as the average actual power consumption (i.e., the value of $||\bw||^2$) via the proposed algorithm for IRS-assisted case and existing solutions \cite{Pei-10} for no IRS case under this location setting.   Observe that in (a),  as  $P_T$ is increasing, the secrecy rate gradually saturate under no IRS case, since  the interference generated to PR gradually reach the maximum threshold $P_I$ so that Alice cannot allocate all the power for signaling. On the contrary, when IRS exists, the secrecy rate  increases with $P_T$ as if the IPC is relaxed. The main reason is that  IRS helps providing new communication link of Alice-IRS-Bob so that full power can be allocated  for signal transmission. This can be validated in (b), in which the red curve representing $||\bw||^2$ by the proposed algorithm coincides with the black curve representing total power $P_T$.  For no IRS case, $||\bw||^2<P_T$ returned by \cite{Pei-10} appears as $P_T>30$dBm, which indicates that full power allocation is not optimal due to tight IPC. Hence, we remark that IRS effectively eliminates the restriction brought by IPC on the unbounded growth of secrecy rate with $P_T$  even when the PR's channel is strongly corelated with Bob's channel.  which can never be achieved in the conventional CR system without IRS. 


\begin{figure}[t]
	\centerline{\includegraphics[width=3.0in]{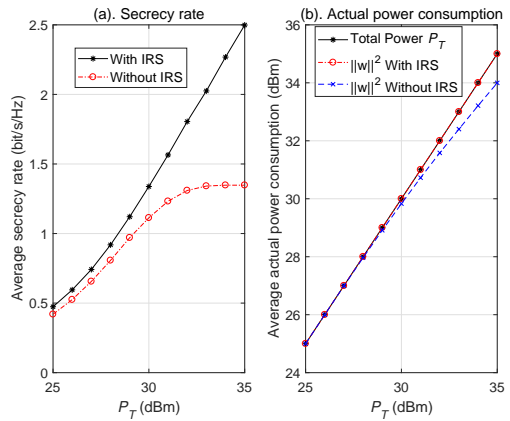}}
	\caption{The average secrecy rate and average actual power consumption returned by proposed algorithm with IRS and existing optimal solution in \cite{Pei-10} without IRS  when PR is located at (20, 0, 0).  $P_I$ is fixed at 25dBm, $m=n=3$.}
\end{figure}

Fig.4 shows the robust average secrecy rate returned by the proposed AO algorithm under imperfect Eve's CSI. Same with the results illustrated in Fig.2, our proposed solution also achieves significantly better performance than that for no IRS case as well as random phase shifts solution with IRS under different  channel estimation errors. And with larger settings of $m, n$, larger secrecy rate can be achieved due to the increased degree of freedom. Furthermore, as expected, the achievable secrecy rate is decreased when  the estimation error of Eve's CSI exists compared with perfect CSI case, since the signal power from the direct Alice-Bob link and reflected Alice-IRS-Bob link cannot accurately focus on Bob.  And as the region of estimation error is larger, more reduction in secrecy rate appears due to more  power loss when transmitting and reflecting the signals. Apart from this simulation, we also have made comparison about the secrecy rate between IRS-assisted and no IRS design by setting the location of PR at (20, 0, 0) given imperfect Eve's CSI.  Same with the results illustrate in Fig.3, full power allocation is always optimal under IRS-assisted design  so that secrecy rate keeps growing with $P_T$, from which the results are omitted here for brevity.

\begin{figure}[t]
	\centerline{\includegraphics[width=3.0in]{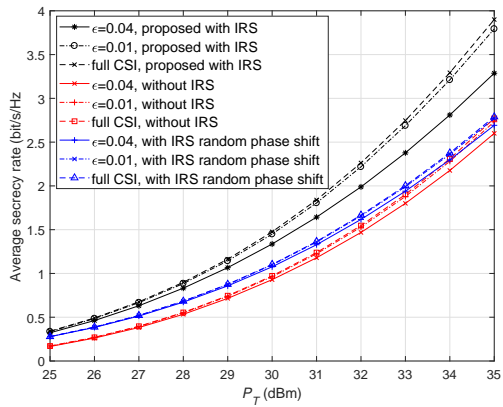}}
	\caption{The robust  average secrecy rate  returned by proposed algorithm with IRS, random phase shift with IRS and no IRS case under imperfect Eve's CSI condition. $\tilde{\epsilon}_E=\tilde{\epsilon}_{AE}=\epsilon$, $P_I$ is fixed at 30dBm, $m=3, n=4$.}
\end{figure}

Fig.5 shows the actual achievable average secrecy rate versus target SNR $T$ at Bob returned by the proposed AN aided scheme with IRS, random phase shift with IRS and  no IRS case  when  Eve's CSI is completely unknown. Based on the results, one observes that given a finite SNR ranging from 25dBm to 35dBm,  a positive secrecy rate $C_s>0$  still can be achieved by our scheme so that secure communication also can be guaranteed. Also note that  the  performance returned by our proposed solution with IRS is better than the other two solutions. The main reason is that more sufficient  power can be saved for Alice to achieve the same target SNR via jointly optimizing $\bw$ and $\bs$  than that by simply optimizing $\bw$ given random phase shifts at IRS or without IRS. Hence, for IRS-assisted design, Alice can use more residual power for AN signaling so as to decrease the SNR at Eve as much as possible, resulting higher secrecy rate. Here we remark that based on our extensive simulations, as long as  $T$ is set properly, positive actual secrecy rate can be achieved in most scenarios. But sometimes the actual secrecy rate becomes zero especially when Eve is located close to Alice. Therefore, to guarantee positive secrecy rate under this unfavorable environment, Bob needs to lower the value of $T$  so that Alice could save more power used to  decrease the SNR at Eve via AN signaling.

\begin{figure}[t]
	\centerline{\includegraphics[width=3.0in]{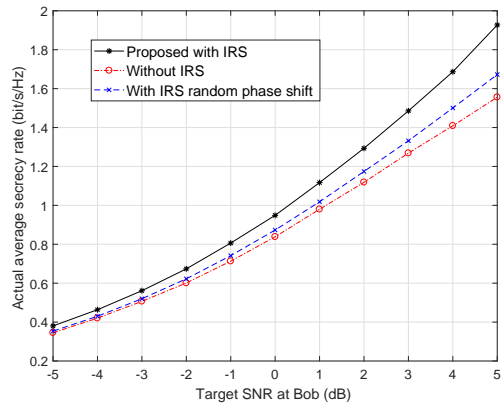}}
	\caption{The actual average secrecy rate  returned by the proposed AN aided scheme for IRS-assisted case, random phase shift for IRS-assisted case and no IRS case under no Eve's CSI. $P_T$ is fixed at 40dBm, $P_I$ is fixed at 30dBm. $m=n=4$.}
\end{figure}

To further exploit  the secrecy performance returned by our proposed scheme  when increasing the  target SNR $T$ to larger under no Eve's CSI case, Fig.6 shows the actual  secrecy rate $C_s$ versus $T$ under different settings of total power $P_T$. Note that for all settings of $P_T$, $C_s$ firstly increases with $T$  since the transmission rate $C_B=T$ at Bob dominates. However, as $T$ grows to higher, $C_s$ starts to decrease sharply, since the sufficient residual power $P_T-P_S$ for AN signaling is reduced significantly so that the information leakage to Eve $C_E$ dominates.  Finally, as $T$ grows to high enough,  the total
power $P_T$ can not support to meet the QoS constraint so that $P13$ becomes infeasible
and hence secure communication is not achievable (e.g., the secrecy rate stops at $T$=24 dB when $P_T\geq$55 dBm in the red curve). Therefore, we see that there is a trade off between improving QoS at Bob and enhancing the secrecy rate.   Hence, based on the results in Fig.5 and 6, one concludes that it is better to balance the setting between $T$ and the residual power $P_T-P_S$ for AN signaling so as to achieve a good secrecy performance.

\begin{figure}[t]
	\centerline{\includegraphics[width=3.0in]{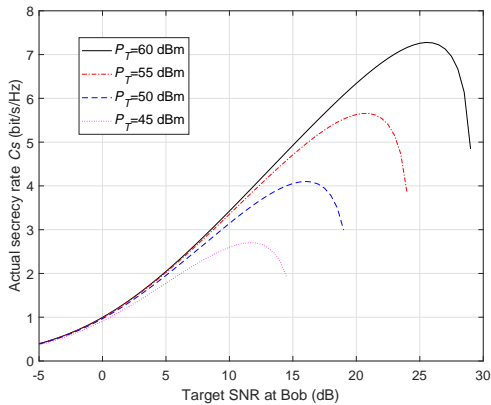}}
	\caption{The actual secrecy rate versus  $T$ returned by the proposed scheme under no Eve's CSI  given different total power $P_T$. $P_I$ is fixed at 30dBm. $m=4, n=6$.}
\end{figure}

 Fig.7 and Fig.8 illustrates the convergence of the objective function $C_s(\bw_k,\bs_k)$ in $P1$ under full CSI and imperfect Eve's CSI given fixed $\tau$ respectively. Observe that it requires about 4 to 16 iterations for $C_s(\bw_k,\bs_k)$ to converge to the target accuracy $10^{-3}$ under full CSI case, and 15 to 50 iterations are required to converge under imperfect Eve's CSI case. Also for both two CSI cases,  the convergence  is monotonically increasing. Based on our other extensive simulations, our proposed AO algorithms under both full and imperfect Eve's CSI  are guaranteed to monotonic convergence.

\begin{figure}[t]
	\centerline{\includegraphics[width=3.0in]{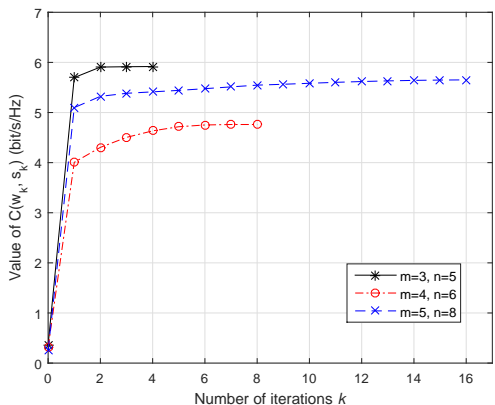}}
	\caption{Convergence of the objective value $C_s(\bw_k,\bs_k)$ in $P1$ returned by the proposed AO algorithm for solving $P1$ under full CSI. $P_T, P_I$ are fixed at 30dBm.}
\end{figure}

\begin{figure}[t]
	\centerline{\includegraphics[width=3.0in]{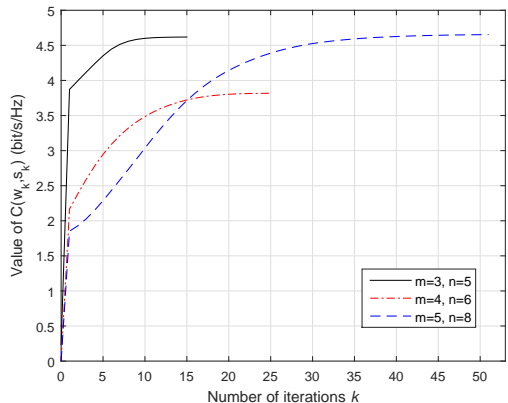}}
	\caption{Convergence of the objective value $C_s(\bw_k,\bs_k)$ in $P10$ returned by the proposed AO algorithm  given fixed $\tau$ under imperfect Eve's CSI. $P_T, P_I$ are fixed at 30dBm. $\tilde{\epsilon}_{E}=0.02, \tilde{\epsilon}_{AE}=0.005$.}
\end{figure}

Finally, Fig.9 gives the convergence of the proposed AO algorithm for solving power minimization problem $P13$ when no Eve's CSI is assumed. We plot the objective value $\parallel\bw_k\parallel^2$  in $P13$ versus the number of iterations $k$ under several randomly generated channels with different settings of $m, n$. Note that a monotonically non-increasing convergence is guaranteed for all the channel realizations. And same with the results in Fig.7 and Fig.8, larger settings of $m, n$ results in slower convergence since  more iterations is required to optimize  $\bw$ and $\bs$ with larger dimensions.

\begin{figure}[t]
	\centerline{\includegraphics[width=3.0in]{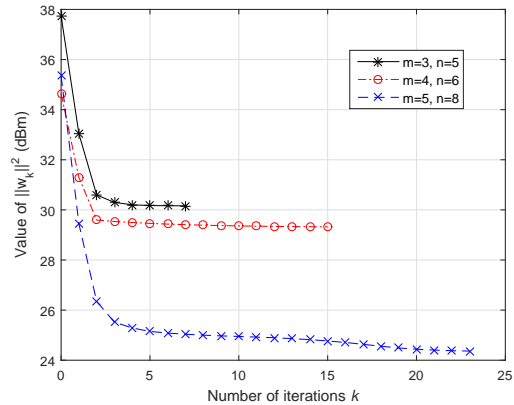}}
	\caption{Convergence of the objective value $\parallel\bw_k\parallel^2$ in $P13$ returned by the proposed AO algorithm given no Eve's CSI. $T$ is fixed at 0dB.}
\end{figure}
\section{Conclusion}

In this paper,  an IRS-assisted spectrum sharing underlay Gaussian CR WTC is studied, and we focus on enhancing the secrecy rate of this channel under  full CSI, imperfect Eve's CSI and completely no Eve's CSI cases. To solve the non-convex optimization problem, AO algorithm is proposed to jointly optimize the beamformer at secondary transmitter and phase shift vector at IRS. When no Eve's CSI is assumed, an AN aided approach is proposed to enhance the secrecy rate. Simulation results have shown that given full and imperfect CSI, our proposed solutions for the IRS-assisted design greatly enhance the secrecy rate compared with other benchmark solutions.  When no Eve's CSI is assumed, positive secrecy rate also can be achieved by our proposed AN aided scheme in most scenarios and hence secure communication can be guaranteed.


\begin{IEEEbiography}[{\includegraphics[width=1in,height=1.25in,		clip,keepaspectratio] {./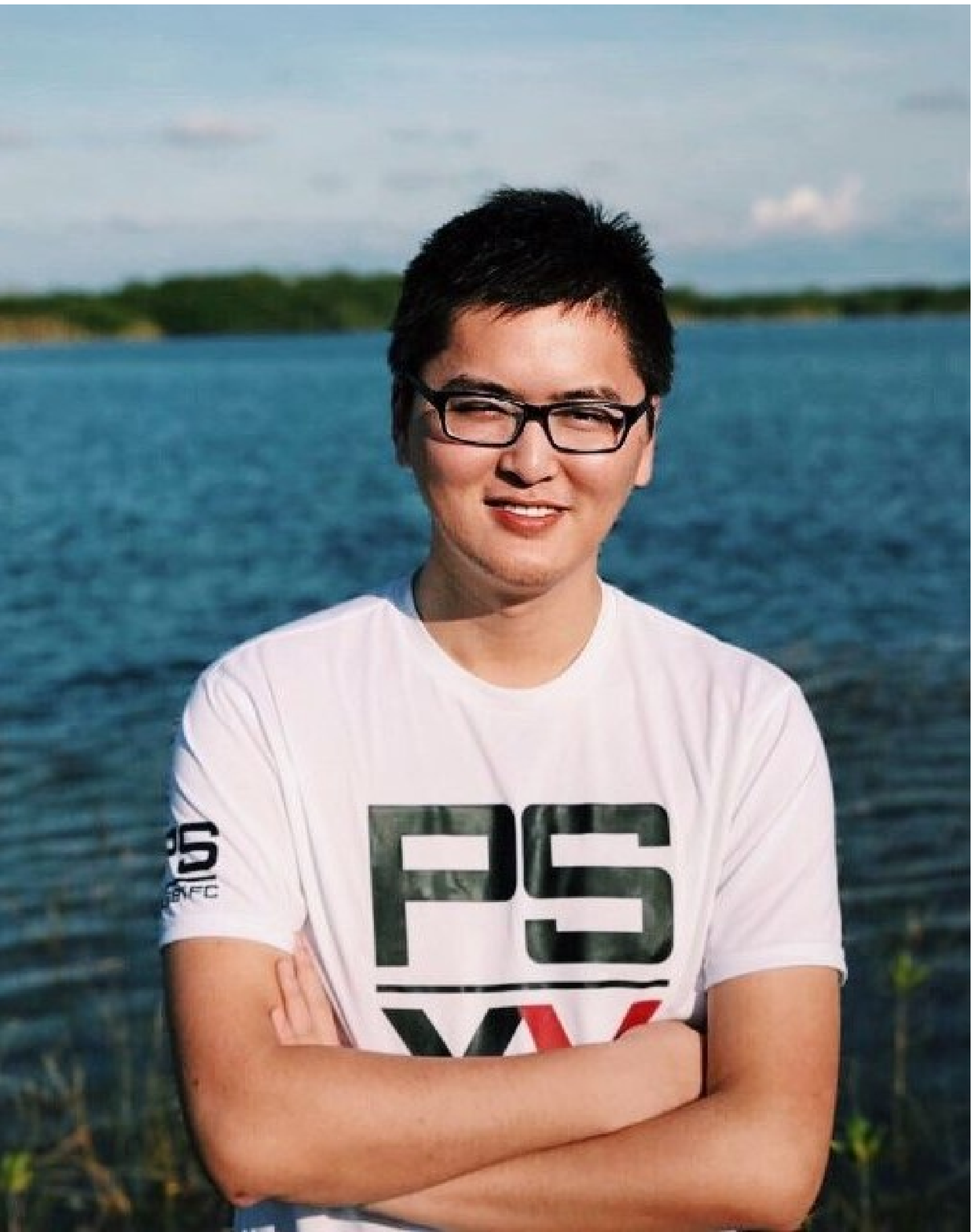}}]{Limeng Dong} was born in Xi’an, China. He received his bachelor, master as well as Ph.D degree from the School of Electronics and Information, Northwestern Polytechnical University, Xi’an, Shaanxi, 710072, China. He is now a postdoctoral researcher in the Ministry of Education Key Lab for Intelligent Networks and Network Security, School of Information and Communications Engineering, Xi’an Jiaotong University, Xi’an, Shaanxi, 710049, China. During 2015 to 2017, he was once a visiting Ph.D student at the School of Electrical Engineering and Computer Science, University of Ottawa, Canada. His research interests include multi-antenna communications, cognitive radio and physical layer security.
\end{IEEEbiography}

\begin{IEEEbiography}[{\includegraphics[width=1in,height=1.25in,clip,keepaspectratio]{./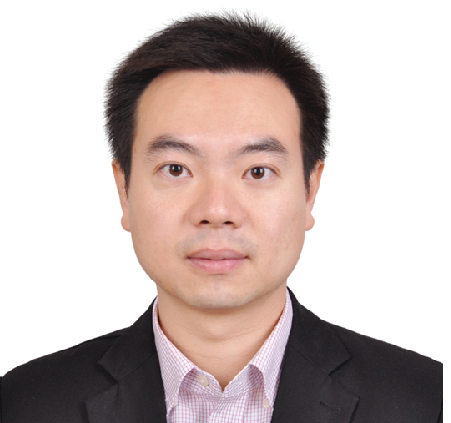}}]
{Hui-Ming Wang} (Senior Member, IEEE) received
the B.S. and Ph.D. degrees in electrical engineering from Xi’an Jiaotong University, Xi’an, China,
in 2004 and 2010, respectively.

From 2007 to 2008, and from 2009 to 2010, he was
a Visiting Scholar with the Department of Electrical
and Computer Engineering, University of Delaware,
Newark, DE, USA. He is currently a Full Professor
with Xi’an Jiaotong University. He has coauthored
the book \emph{Physical Layer Security in Random Cellular Networks} (Springer, 2016) and authored or
coauthored more than 150 IEEE journal articles and conference papers.
His research interests include 5G communications and networks, intelligent
communications, physical-layer security, and covert communications. He was
the Clarivate Highly Cited Researcher in 2019. He received the IEEE ComSoc
Asia–Pacific Best Young Researcher Award in 2018, the National Excellent
Doctoral Dissertation Award in China in 2012, and the Best Paper Award
from the IEEE/CIC International Conference on Communications in China
in 2014. He is also an Associate Editor of the
\textsc{IEEE Transactions on Communications}.

\end{IEEEbiography}

\begin{IEEEbiography}[{\includegraphics[width=1in,height=1.25in,clip,keepaspectratio]{./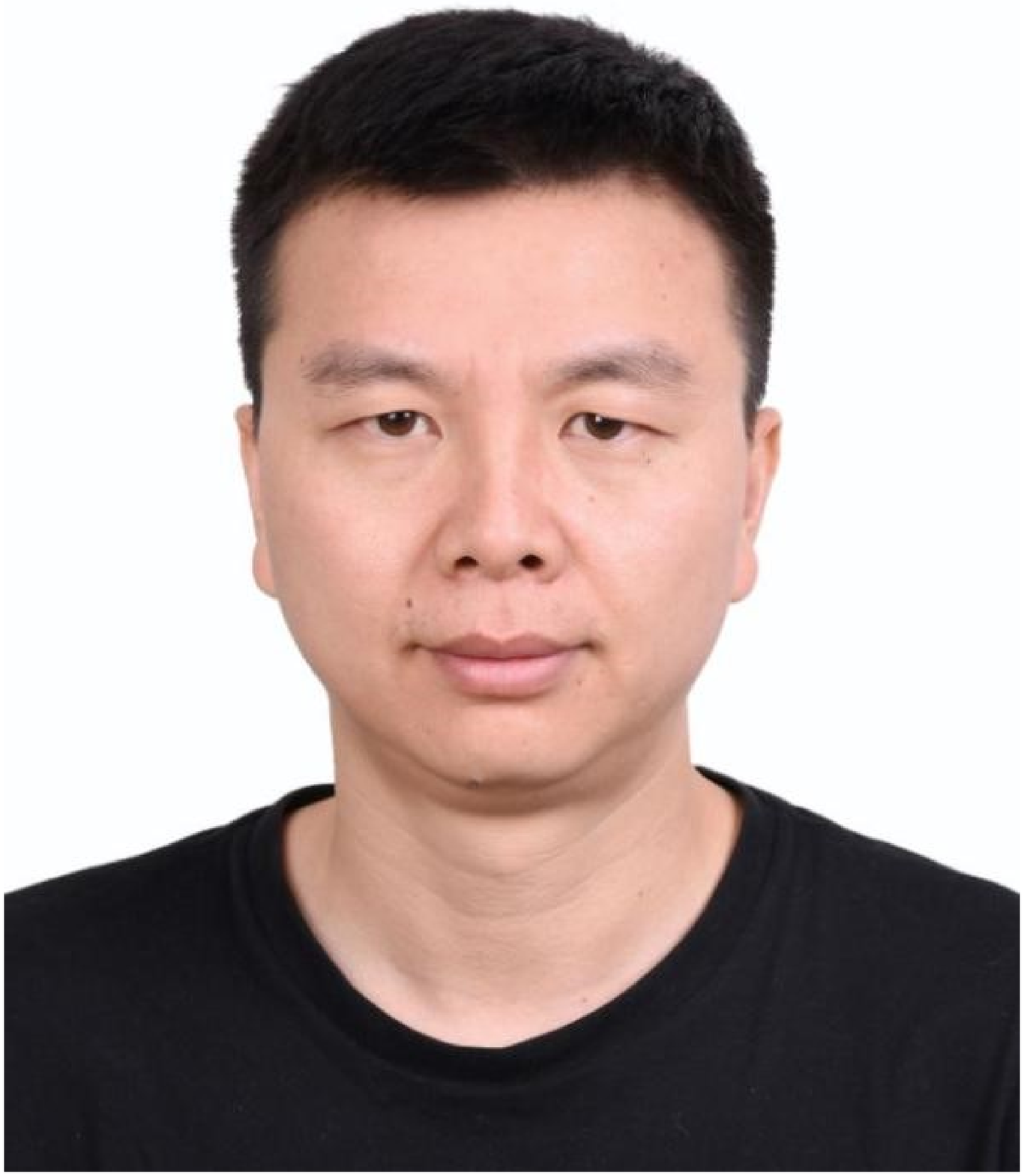}}]
{Haitao Xiao}  received his master and Ph.D degree from Waseda University, Tokyo, Japan. He is now an Assistant Professor in School of Information and Communication Engineering, Xi’an Jiaotong University, and Visiting Researcher of Graduate School of Information, Production and Systems, Waseda University.  
  From 2013 to 2015, he was a Assistant Researcher with the Research Center of of Information, Production and Systems, Waseda University, Japan. From 2016 to 2018, he was a Researcher with the Research Center of of Information, Production and Systems, Waseda University, Japan. His research is about Ad-Hoc, wireless Communication, Signal Process, Artificial intelligent, Covert Communication, Data Analysis and Bridge Diagnosis, etc. 
\end{IEEEbiography}



\begin{thebibliography}{99}



\bibitem {Fragkiadakis-13} A. G. Fragkiadakis \emph{et al.}, ``A survey on security threats and detection
techniques in cognitive radio networks,"  \emph{IEEE Commun. Surv. \& Tut.}, 
vol. 15, no. 1 , pp. 428-445, Feb. 2013.


\bibitem{Bloch-11} M. Bloch and J. Barros, ``Physical-layer security: from information theory to security engineering,"  \emph{Cambridge University Press}, 2011.


\bibitem{Pei-10} Y. Pei, Y.-C. Liang, L. Zhang, and K. C. Teh, ``Secure communication over MISO cognitive radio channels,"  \emph{IEEE Trans. Wireless Commun.}, vol. 9, no. 4, pp. 1494-1502, Apr. 2010.


\bibitem{Pei-11} Y. Pei, Y.-C. Liang, K. C. Teh, and K. H. Li, ``Secure communication in multiantenna cognitive radio networks with imperfect channel state information," \emph{IEEE Trans. Signal Process.}, vol. 59, no. 4, pp. 1683–1693, Apr. 2011.



\bibitem{Nahari-11} A. Al-Nahari \emph{et al.},  ``Beamforming with artificial noise for secure MISOME cognitive radio transmissions," \emph{IEEE Trans. Inf. Forens. Security}, vol. 13, no. 8, pp. 1875-1889, Aug. 2018. 




\bibitem{Nguyen-16} V-D. Nguyen \emph{et al.}, ``Joint information and jamming beamforming for secrecy rate maximization in cognitive radio networks,"  \emph{IEEE Trans. Inf. Forensics Security}, vol. 11, no. 11, pp. 2609-2623, Nov. 2016.



\bibitem{Wang-15} C. Wang, and H.-M. Wang, ``On the secrecy throughput maximization for MISO cognitive radio network in slow fading channels," \emph{IEEE Trans.  Info. Forensics Security}, vol. 9, no. 11, pp. 1814-1827, Nov. 2014.


\bibitem{Dong-18} L. Dong, S. Loyka and Y. Li, ``The secrecy capacity of gaussian MIMO wiretap channels under interference constraints,"  \emph{IEEE J. Sel. Areas Commun.}, vol. 36, no. 4, pp. 704-722, Apr. 2018.

\bibitem{Loyka-18} S. Loyka and L. Dong, ``Optimal full-rank signaling over MIMO wiretap channels under interference constraint," \emph{IEEE Wireless Commun. Letters}, vol. 7, no. 4, pp. 534-537, Aug. 2018.


\bibitem{Dong-20b} L. Dong, S. Loyka and Y. Li, ``Algorithms for globally-optimal secure signaling over Gaussian MIMO wiretap channels under interference constraints," \emph{IEEE Trans. Signal Process.}, vol. 68, pp. 4513-4528, 2020.


\bibitem {Fang-15}B. Fang, Z. Qian, W. Zhong, and W. Shao,  ``AN-aided secrecy precoding for SWIPT in cognitive MIMO broadcast channels," \emph{IEEE Commun. Letters}, vol. 19, no. 9, pp. 1632-1635, Sep. 2015.




\bibitem {Hu-18} S. Hu, F. Rusek, and O. Edfor, ``Beyond massive MIMO: the potential of data transmission with large intelligent surfaces," \emph{IEEE Trans.  Signal Process.}, vol. 66, no. 10, pp. 2746-2758, Mar. 2018.


\bibitem {Yuan-20} Y. Yuan, Y. Zhao, B. Zong, and S. Parolari ``Potential key technologies for 6G mobile communications," \emph{Science China-Information Sciences}, vol. 63, no. 8, pp. 217-235, Aug. 2020.


\bibitem {Renzo-20} M. D. Renzo  \emph{et al.},  ``Smart radio environments empowered by reconfigurable AI meta-surfaces: an idea whose time has come," \emph{EURASIP Journal on Wireless Communications and Networking}," vol. 1, pp. 1-20, 2019.


\bibitem {Renzo-20b} M. D. Renzo  \emph{et al.},  ``Smart radio environments empowered by reconfigurable intelligent surfaces: How it works, state of research, and road ahead," \emph{IEEE J. Sel. Areas Commun.}, vol. 38, no. 11, pp. 2450-2524, Nov. 2020.


\bibitem{Huang-20}  K.-W. Huang and H.-M. Wang, ``Passive beamforming for IRS aided wireless network,” \emph{IEEE Wireless Commun. Letters}, vol. 9, no. 12, pp. 2035-2039, Dec. 2020.




\bibitem {Wu-19b}  Q. Wu and R. Zhang, ``Intelligent reflecting surface enhanced wireless network via joint active and passive beamforming,"  \emph{IEEE Trans.  Wireless Commun.}, vol. 18, no. 11, pp. 5394-5409, Nov. 2019.


\bibitem{Pan-20} C. Pan \emph{et al.}, ``Intelligent reflecting surface aided MIMO bradcasting for simulataneous wireless information and power transfer," \emph{IEEE J. Sel. Areas Commun.}, vol. 38, no. 8, pp. 1719-1734, Aug. 2020.


\bibitem{Hu-20} X. Hu, J. Wang, and C. Zhong, ``Statistical CSI based design for intelligent reflecting surface assisted MISO systems," \emph{Science China-Information Sciences}, vol. 63, no. 12, pp. 211-220, Dec. 2020.


\bibitem {Zhou-19} G. Zhou \emph{et al.}, ``Robust beamforming design for intelligent
reflecting surface aided MISO communication systems,"  \emph{IEEE Wireless Commun. Letters}, vol. 9, no. 10, pp. 1658-1662, Oct. 2020.


\bibitem {Huang-19c} C. Huang, A. Zappone, G. C. Alexandropoulos, M. Debbah and C. Yuen,  ``Reconfigurable intelligent surfaces for energy efficiency in wireless communication," \emph{IEEE Trans. Wireless Commun.}, vol. 18, no. 8, pp. 4157-4170, Aug. 2019.


\bibitem {Yuan-19} J. Yuan \emph{et al.}, ``Intelligent reflecting surface-assisted cognitive
radio system," \emph{IEEE Trans. Commun.}, vol. 69, no. 1, pp. 675-687, Jan. 2021.


\bibitem {Guan-20} X. Guan, Q. Wu, and R. Zhang, ``Joint power control and passive beamforming in
IRS-assisted spectrum sharing," \emph{IEEE Commun. Letters}, vol. 24, no. 7, pp. 1553-1557, Jul. 2020.


\bibitem {Xu-20} D. Xu, X. Yu, and R. Schober, ``Resource allocation for intelligent reflecting
surface-assisted cognitive radio networks," \emph{2020 IEEE 21st International Workshop on Signal Processing Advances in Wireless Communications (SPAWC)}, Atlanta, GA, USA, May 2020.


\bibitem {Zhang-20} L. Zhang \emph{et al.},  ``Intelligent reflecting surface aided MIMO cognitive radio systems," \emph{IEEE Trans. Veh. Technol.}, vol. 69, no. 10, pp. 11445-11457, Oct. 2020.


\bibitem {He-20} J. He, K. Yu, Y. Zhou, and Y. Shi, ``Reconfigurable intelligent surface enhanced cognitive radio networks," \emph{arXiv:2005.10995}, [Online] https://arxiv.org/abs/2005.10995, 2020.


\bibitem {Zhang-20b} L. Zhang \emph{et al.},  ``Robust beamforming design for intelligent reflecting surface aided cognitive radio systems with imperfect cascaded CSI,"  \emph{arXiv:2004.04595}, [Online] https://arxiv.org/abs/2004.04595, 2020.


\bibitem {Xu-20b} D. Xu \emph{et al.}, ``Resource allocation for IRS-assisted full-duplex cognitive radio systems," \emph{IEEE Trans. Commun.}, vol. 68, no. 12, pp. 7376-7394,  Dec. 2020.



\bibitem {Cui-19} M. Cui, G. Zhang, and R. Zhang, ``Secure wireless communication via intelligent reflecting surface,"  \emph{IEEE Wireless Commun.n Letters}, vol. 8, no. 5, pp. 1410-1414, Oct. 2019.


\bibitem{Shen-19} H. Shen,  \emph{et al.}, ``Secrecy rate maximization for intelligent reflecting surface assisted multi-antenna communications,"  \emph{IEEE Commun. Letters}, vol. 23, no. 9, pp. 1488-1492, Jun. 2019.


\bibitem {Guan-19} X. Guan, Q. Wu, and R. Zhang, ``Intelligent reflecting surface assisted secrecy
communication: is artificial noise helpful or not?," \emph{IEEE Wireless Commun. Letters},  vol. 9, no. 6, pp. 778–782, Jun. 2020.



\bibitem {Yu-19} X. Yu and R. Schober, ``Enabling secure wireless communications via intelligent reflecting surfaces,"  \emph{2019 IEEE Global Communications Conference (GLOBECOM)}, Waikoloa, Hi, USA,  pp. 1-6, Dec. 2019.


\bibitem {Dong-21} L. Dong, H.-M. Wang, J. Bai, and H. Xiao, ``Double intelligent reflecting surface for secure transmission with inter-surface signal reflection," \emph{IEEE Trans. Veh. Technol.}, vol. 70, no. 3, pp. 2912-2916, Mar. 2021.


\bibitem{Dong-20d} H.-M. Wang, J. Bai, and L. Dong, ``Intelligent reflecting surface assisted secure transmission without eavesdropper's CSI," \emph{IEEE Signal Process. Letters}, vol. 27, pp. 1300-1304, 2020.


\bibitem {Xu-19} D. Xu, \emph{et al.}, ``Resource allocation for secure IRS-assisted multiuser MISO systems,"  \emph{2019 IEEE Globecom Workshops (GC WKshps)}, Waikoloa, HI, USA, pp. 1-6, Dec. 2019.


\bibitem{Yu-20} X. Yu \emph{et al.}, ``Robust and secure wireless communications via intelligent reflecting surfaces," \emph{IEEE J. Sel. Areas Commun.}, vol. 38, no. 11, pp. 2637-2652, Nov. 2020.


\bibitem {Dong-20} L. Dong and H.-M. Wang, ``Secure MIMO transmission via intelligent reflecting surface," \emph{IEEE Wireless Commun. Letters}, vol. 9, no. 6, pp.
787-790, Jun. 2020.


\bibitem {Dong-20c} L. Dong and H.-M. Wang, ``Enhancing Secure MIMO transmission via intelligent reflecting surface," \emph{IEEE Trans. Wireless Commun.}, vol. 19, no. 11, pp. 7543-7556, Nov. 2020.



\bibitem {Wang-19} Z. Wang, L. Liu, and S. Cui, ``Channel estimation for intelligent reflecting
surface assisted multiuser communications:
framework, algorithms, and analysis," \emph{IEEE Trans. Wireless Commun.},  vol. 19, no. 10, pp. 6607-6620, Oct. 2020.

 
\bibitem{Dinkelbach-67} W. Dinkelbach, ``On nonlinear fractional programming,"  \emph{Management
Science}, vol. 13, no. 7, pp. 492–498, Mar. 1967.


\bibitem{Sun-17} Y. Sun, P. Babu, and D. P. Palomar,  ``Majorization-minimization algorithms in signal processing, communications, and machine learning," \emph{IEEE Trans. Signal Process.}, vol. 65, no. 3, pp. 794-816, Feb. 2017.


\bibitem{Song-15} J. Song, P. Babu, and D. P. Palomar, ``Optimization methods for designing sequences with low autocorrelation sidelobes," \emph{IEEE Trans. Signal Process.}, vol. 63, no. 15, pp. 3998-4009,  Aug. 2015.


\bibitem{Boyd-04} S. Boyd and L. Vandenberghe, \emph{Convex Optimization}, Cambridge University Press,  2004.




\bibitem {E-13} E. A. Gharavol and E. G. Larsson, ``The sign-definiteness lemma and its applications to robust transceiver optimization
for multiuser MIMO systems,"  \emph{IEEE Trans. Signal Process.}, vol. 61, no. 2, pp. 238–252, Jan. 2013.


\bibitem {Li-13} Q. Li and W.-K. Ma, ``Spatially selective artificial-noise aided transmit optimization for MISO multi-eves secrecy rate maximization," \emph{IEEE Trans. Signal Process.}, vol. 61, no. 10, pp. 2704-2717, May 2013.



\bibitem {Lipp-15} T. Lipp and S. Boyd, ``Variations and extension of the convex–concave
procedure," \emph{Optimization \& Engineering}, vol. 17, no. 2, pp. 263-287, Nov. 2015.

\end{thebibliography}
\end{document}